\newtheorem{definition}{Definition}[section]
\newtheorem{lemma}[definition]{Lemma}
\newtheorem{theorem}[definition]{Theorem}
\newcommand{\qed}{\hfill $\square$\smallskip}
\newcommand{\N}{\mathds{N}}
\title{Dynamic Monopolies in Reversible Bootstrap Percolation}
\author{Clemens Jeger}
\affil{Department of Mathematics, ETH Zurich}
\author{Ahad N. Zehmakan}
\affil{Department of Computer Science, ETH Zurich}
\providecommand{\keywords}[1]{\textbf{\textit{Key Words:}} #1}
\date{} 
\begin{document}
\maketitle

\begin{abstract}
We study an extremal question for the (reversible) $r-$bootstrap percolation processes. Given a graph and an initial configuration where each vertex is active or inactive, in the $r-$bootstrap percolation process the following rule is applied in discrete-time rounds: each vertex gets active if it has at least $r$ active neighbors, and an active vertex stays active forever.
In the reversible $r$-bootstrap percolation, each vertex gets active if it has at least $r$ active neighbors, and inactive otherwise. We consider the following question on the $d$-dimensional torus: how many vertices should be initially active so that the whole graph becomes active? Our results settle an open problem by Balister, Bollobas, Johnson, and Walters~\cite{balbol09} and generalize the results by Flocchini, Lodi, Luccio, Pagli, and Santoro~\cite{flocchini04}.
\end{abstract}
\keywords{Bootstrap percolation, dynamic monopoly, r-threshold model, majority rule, percolating set}
\section{Introduction}
There are many real-life phenomena that can be modeled as discrete dynamical systems of the following type: there is a set of agents with a neighborhood relation, and in each time step, the state of every agent is updated according to the states of its neighboring agents. Often, one is interested in understanding the long-term behavior of the system.

An example is \emph{rumor spreading} in a social network. Here, a ``rumor'' just stands for a piece of information that agents may or may not have; for example, knowledge of a specific video. A similar example is \emph{opinion spreading}. In the simplest case, every agent has at each time one of two opposite opinions but may change back and forth several times, depending on the opinions in the neighborhood. In both examples, states of agents are binary, and we distinguish \emph{active} agents (who know the rumor, or have positive opinion), from the \emph{inactive} ones. 

In this binary scenario, a specific question concerning the long-term behavior of such a system is whether it \emph{percolates}; i.e.\ whether eventually, all agents will be active. This is motivated by the desire to understand for example under which conditions a video becomes viral, or a specific opinion becomes dominant.

There are several reasonable ways of modeling the influence of neighboring agents. In this paper, we focus on three well-studied models. In each case, the ``agents with neighborhood relation'' form an undirected graph $G$, where initially, each vertex is either active or inactive. In discrete time rounds, vertices simultaneously update their state according to some fixed and deterministic rule. 

% A social network, the graph of relationships among a group of individuals, plays a fundamental role as a medium for the spread of information, ideas, and influence among its members. 
% For example, social media such as Facebook, Twitter, and Instagram have served as a crucial tool for communication and information disseminating in today's life.
% Studying different social behaviors like how people form their opinion regarding a new product or an election or how the information spreads through a social network have attracted a considerable amount of attention, and many different processes have been introduced to model this sort of social phenomena. Among all those, different variants of bootstrap percolation model have been studied extensively. 

In the \emph{r-bootstrap percolation} (or shortly $r-$BP) process, an inactive vertex with at least $r$ active neighbors becomes active and stays active forever. This models rumour spreading processes. 

% The main idea behind this basic process is to model different monotone processes like how a rumor spreads through a society, how fire propagates in a forest, and how an infection spreads among cells, where an active/inactive vertex stands for an individual who is informed/uninformed of a rumor, a tree which is or not on fire, and an infected/uninfected cell.

% In the above examples as soon as a vertex gets active, it stays active forever; for instance, when an individual is aware of a rumor, s/he will stay aware of it. However, there are real-world examples where a vertex might switch among active and inactive states dynamically to adjust its state as a function of its neighbors' states. For instance, an individual might keep updating its opinion (positive/negative) regrading a reform proposal to the most frequent opinion among its friends. To study this kind of non-monotone processes the following two elementary models have been introduced. 

In the \emph{reversible r-bootstrap percolation}, a vertex becomes active if it has at least $r$ active neighbors, and inactive otherwise. This models opinion spreading processes. Furthermore, in the \emph{majority model} each vertex updates its state (active or inactive) to the most frequent state in its neighborhood, and keeps its current state in case of a tie. This can be considered as a variant of the reversible r-bootstrap percolation where each vertex $v$ has an individual threshold $r_v$ depending on its degree in the graph. In the previous two models, $r$ is a fixed parameter and the same for all vertices.

% Suppose that we have data on a social network, with estimates for the extent to which individuals influence one another, and we would like to market a new product that we hope will be adopted by a large fraction of the network. In other words, we aim to trigger a cascade of influence by which friends will recommend the product to other friends, and many individuals will ultimately try it by initially targeting a few influential members of the network, say by giving them free samples of the product.

To understand percolation, we use the very well-studied concept of a dynamic monopoly. A \emph{configuration} is an assignment of active/inactive states to the vertices. In (reversible) $r-$BP or majority model, a configuration is called a \emph{dynamic monopoly}, or shortly \emph{dynamo}, if by starting from this configuration, the graph eventually gets fully active. Furthermore, the dynamo is \emph{monotone} if no active vertex ever becomes inactive during the process. Clearly, every dynamo in $r-$BP is also a monotone dynamo. Furthermore, in the reversible variant, each monotone dynamo is a dynamo, but not necessarily the other way around. We are interested in the size of a smallest dynamo, where the size of a configuration is the number of its active vertices. In the concrete applications, we are for example trying to understand how many (and which) initial ``influencers'' it takes for a video to become viral, or for an opinion to become dominant.

Although the concept of a dynamo was studied earlier, e.g. by Schonmann~\cite{schonmann1992behavior} and Balogh and Pete~\cite{balogh1998random}, it was formally defined and studied in the seminal work by Kempe, Kleinberg, and Tardos~\cite{optimization_maximize_sprad_of_influence} and independently by Peleg~\cite{PELEG_dynamo}, motivated from fault-local mending in distributed systems. There is a massive body of work concerning the minimum size of a (monotone) dynamo for (reversible) $r-$BP and majority model in different classes of graphs, for instance hypercube~\cite{balogh2010bootstrap,hambardzumyan2017polynomial,morrison2018extremal}, the binomial random graph~\cite{threshold_dynamo_g_n_p_2013,threshold_dynamo_random_graph_2017,chang2011spreading}, random regular graphs~\cite{gartner2018majority}, power-law graphs~\cite{dynamo_approximation_hard}, planar graphs~\cite{peleg2002local}, and many others. Motivated from the literature of cellular automata and a range of applications in statistical physics special attention has been devoted to the $d$-dimensional torus $\mathbb{T}_n^d$; the $d-$dimensional torus $\mathbb{T}_n^d$ is the graph with vertex set $[n]^d$, where two vertices are adjacent if and only if they differ by $1$ mod $n$ in exactly one coordinate (see Definition~\ref{def:torus}). 

\textbf{Our contribution:} Prior works, for instance see~\cite{balogh1998random,balbol09,riedl2010largest,hambardzumyan2017polynomial,flocchini04}, have established some lower and upper bounds on the minimum size of a (monotone) dynamo in $\mathbb{T}_n^d$ for the above models; however, several questions have remained unanswered. The main goal of the present paper is to settle these questions and provide a complete picture (see Table~\ref{Table 1}). In particular, our results settle an open problem by Balister, Bollobas, Johnson, and Walters~\cite{balbol09} and generalize the results by Flocchini, Lodi, Luccio, Pagli, and Santoro~\cite{flocchini04}. We sometimes shortly write small $r$ and large $r$ respectively for $1\leq r\leq d$ and $d+1\leq r\leq 2d$.   

%\cite{balogh1998random,balbol09,riedl2010largest,hambardzumyan2017polynomial}
%hexagonal lattice~\cite{adams2011dynamic}, triangular lattice~\cite{adams2012modeling}
%$d$-dimensional torus~\cite{flocchini04,balbol09}

\begin{table}[!ht]
\centering
    \begin{tabular}{| l | l | l |}
    \hline
    \textbf{process} & \textbf{dynamo} & \textbf{monotone dynamo} \\
    \hline
    \hline & &\\
    $r-$BP  (small $r$) & \boldmath{$\frac{1}{r}\binom{d}{r-1} n^{r-1}\pm\Theta(n^{r-2})$}~\cite{morrison2018extremal} & \boldmath{$\frac{1}{r}\binom{d}{r-1} n^{r-1}\pm\Theta(n^{r-2})$}~\cite{morrison2018extremal} \\
    
    & &\\ \hline & &\\
    $r-$BP  (large $r$) & $(1-\frac{d}{r})n^d\pm\Theta(n^{d-1})$~\cite{balogh1998random,balbol09} & $(1-\frac{d}{r})n^d\pm\Theta(n^{d-1})$~\cite{balogh1998random,balbol09} \\
    
    & &\\ \hline & &\\
    reversible $r-$BP  (small $r$) & \boldmath{$\frac{1+i}{r}\binom{d}{r-1}n^{r-1}\pm\Theta(n^{r-2})$} & \boldmath{$\frac{2}{r}\binom{d}{r-1}n^{r-1}\pm\Theta(n^{r-2})$} \\
    
        & \scriptsize{\boldmath{$i=1$ \textbf{if} $n$ \textbf{is even,} $i=0$ \textbf{o.w.}}} & \\

    & &\\ \hline & &\\
    reversible $r-$BP  (large $r$) & \boldmath{$2(1-\frac{d}{r})n^d\pm\Theta(n^{d-1})$} & \boldmath{$2(1-\frac{d}{r})n^d\pm\Theta(n^{d-1})$} \\
    
    & &\\ \hline & &\\
    majority model & $(1-\frac{d}{d+1})n^d\pm\Theta(n^{d-1})$~\cite{balbol09} & \boldmath{$(1-\frac{d}{d+2})n^d\pm\Theta(n^{d-1})$} \\

    & & \\
    \hline
    \end{tabular}
\caption{ The minimum size of a (monotone) dynamo in $\mathbb{T}_{n}^{d}$. Our results are marked in bold. However, the lower bound for $r-$BP (small $r$) comes from~\cite{morrison2018extremal}, and some of our results (as will be discussed in the text) were known for $d=2$ and $r=d,d+1$ respectively by~\cite{flocchini04,balogh1998random} and~\cite{balbol09}.} \label{Table 1}
\end{table}
In the rest of this section, we elaborate on the entries of Table~\ref{Table 1} and provide the main ideas behind our proof techniques. Let us first consider the size of a smallest dynamo in $r-$BP, which is supposed to be easier to handle due to its monotonicity. For large $r$, by applying the invariant method Balogh and Pete~\cite{balogh1998random} proved the lower bound of $(1-\frac{d}{r})n^d-\Theta(n^{d-1})$ on the minimum size of a dynamo in $\mathbb{T}_n^d$ and conjectured that this bound is tight (all results discussed in this paper are tight up to the terms of lower orders, here $\Theta(n^{d-1})$), which was verified several years later by Balister et al.~\cite{balbol09}. For small $r$, Balogh and Pete~\cite{balogh1998random} claimed the size of a smallest dynamo is of order $\Theta(n^{r-1})$ and provided tight bounds for $r=2$ and $r=d$. After several unsuccessful attempts (see e.g.~\cite{riedl2010largest,balogh2006bootstrap}), very recently Morrison and Noel~\cite{morrison2018extremal} proved the lower bound of $\frac{1}{r} {d\choose r-1} n^{r-1}$. They rely on some prior results regarding the weak saturation number of $\mathbb{T}_{n}^{d}$. Building on their results, Hambardzumyan, Hatami, and Qian~\cite{hambardzumyan2017polynomial} proved the upper bound of ${d\choose r-1}n^{r-1}$, which leaves an $r$ factor gap. To close this gap, we construct dynamos whose size matches the lower bound, up to the terms of lower orders.

On the other hand, there is not much known concerning the size of a smallest dynamo in reversible $r-$BP and $\mathbb{T}_n^d$. Balister et al.~\cite{balbol09} studied the problem for $r=d,d+1$ and posed the case of $r>d+1$ as an open problem. To address their problem, we prove that the size of a smallest dynamo in reversible $r-$BP is $2(1-\frac{d}{r})n^{d}\pm \Theta(n^{d-1})$ for $d+1\leq r\leq 2d$. Comparing this result with the one regarding $r-$BP implies the number of active vertices needed to activate the whole torus $\mathbb{T}_n^d$ for large $r$ and reversible $r-$BP is two times for $r-$BP. 

We also study reversible $r-$BP for small $r$, and interestingly, it turns out in this setting the parity of $n$ is a determining factor. More precisely, we prove the size of a smallest dynamo in $\mathbb{T}_n^d$ and under reversible $r-$BP for small $r$ is equal to $\frac{1}{r} {d \choose r-1} n^{r-1}\pm\Theta(n^{r-2})$ and $\frac{2}{r} {d \choose r-1} n^{r-1}\pm\Theta(n^{r-2})$ respectively for odd $n$ and even $n$. As a simple example, for $r=1$ and $\mathbb{T}_n^1$, which corresponds to the cycle $C_n$, a single active vertex makes the cycle fully active for odd $n$, but for even $n$ one needs two active vertices. 

%See also Figure~\ref{fig:figure2}, where in reversible $2-$BP an active ``diagonal'' results in fully active configuration for $\mathbb{T}_5^2$ but not for $\mathbb{T}_4^2$. We assume black and white vertices represent respectively active and inactive vertices. 
%\begin{figure}[!ht]
%\centering
%\includegraphics[scale=0.7]{figures/}
%\caption{\centering Reversible $2-$BP on $\mathbb{T}_4^2$ and $\mathbb{T}_5^2$.}
%\label{fig:figure2}
%\end{figure}
%\begin{table}[!ht]
%\centering
    %\begin{tabular}{| l | l | l |}
    %\hline
    %\textbf{process} & \textbf{$1\leq r\leq d$} & \textbf{$d+1\leq r\leq 2d$} \\
    %\hline
    %\hline & &\\
    %$r-$BP & $\frac{1}{r}\binom{d}{r-1} n^{r-1}\pm\Theta(n^{r-2})$ & \textbf{$(1-\frac{d}{r})n^d\pm\Theta(n^{d-1})$} \\
    
    %& &\\ \hline & &\\
    %reversible $r-$BP & $\frac{1+i}{r}\binom{d}{r-1}n^{r-1}\pm\Theta(n^{r-2})$ & $2(1-\frac{d}{r})n^d\pm\Theta(n^{d-1})$ \\
    
    %& \scriptsize{$i=1$ if $n$ is even, $i=0$ o.w.} & \\
    %& & \\
    %\hline
    %\end{tabular}
    %\caption{\centering The size of a smallest dynamo in $\mathbb{T}_{n}^{d}$.}\label{Table 1}
%\end{table}
Furthermore, we provide tight bounds on the size of a smallest monotone dynamo in reversible r-BP, while the answer previously was known only for the special case of $r=d=2$ by Flocchini et al.~\cite{flocchini04}. It is shown that the minimum size of a monotone dynamo in reversible $r-$BP and $\mathbb{T}_n^d$ is  $\frac{2}{r} {d \choose r-1} n^{r-1}\pm\Theta(n^{r-2})$ and $2(1-\frac{d}{r})n^{d}\pm\Theta(n^{d-1})$ respectively for $1\le r\le d$ and $d+1\le r\le 2d$, regardless of the parity of $n$. Interestingly, it implies that for large $r$ dropping the monotonicity constraint has almost no impact, in the sense that the size of a smallest monotone dynamo and dynamo are the same, up to the terms of lower orders.

Our upper bounds are constructive, meaning we provide dynamos of corresponding sizes and argue that they eventually make the whole torus active. For large $r$ we present an explicit subset of vertices which has our desired size and discuss if all vertices in this set are initially active then some other subsets of vertices get active one by one until the whole torus is active. For small $r$, we show if a specific group of $(r-1)-$dimensional sub-tori are fully active then the whole torus gets active and for the activation of these sub-tori we rely on our results for large $r$. Regarding the lower bounds, we prove for any bipartite graph (with minimum degree at least $r$) the size of a smallest dynamo under reversible $r-$BP is at least twice as large as the size of a smallest dynamo under $r-$BP. The proof of this statement is built on the simple fact that for a bipartite graph $G=(V\cup U,E)$ and the reversible $r-$BP, the states of the vertices in $V$ in the next round only depend on the states of the vertices in $U$ in this round and vice versa. Putting this statement in parallel with the lower bounds from $r-$BP and applying some small modifications provide the desired lower bounds.

In the majority model, Balister et al.~\cite{balbol09} proved the minimum size of a dynamo is equal to $(1-\frac{d}{d+1})n^{d}\pm\Theta(n^{d-1})$. Flocchini et al.~\cite{flocchini04} studied the monotone case for $d=2$. We extend their results by showing that the size of a smallest monotone dynamo in the majority model is $(1-\frac{d}{d+2})n^{d}\pm\Theta(n^{d-1})$. These results lead to the following interesting observation. In $\mathbb{T}_n^d$, the only difference among the majority model and reversible $r-$BP with $r=d$, which is also known as the \emph{biased majority model}~\cite{SCHONMANN1990619}, is the tie-breaking rule; in case of a tie in the biased variant, a vertex always gets active while in the majority model it stays unchanged. As discussed, the size of a smallest (monotone) dynamo in the majority model is of order $\Theta(n^d)$ while it is of order $\Theta(n^{d-1})$ for the biased majority. This demonstrates how minor local alternations can result in substantial changes in the global behavior of the process.

%Though $r-$BP and reversible $r-$BP were primarily introduced to model the behavior of certain interacting particle systems like fluid flow in rocks~\cite{adler1988diffusion,chalupa1979bootstrap} and basic issues in metastability theory, namely the problem of nucleation and growth of super-critical droplets~\cite{aizenman1988metastability}, later on they found many applications in a wide spectrum of areas like modeling biological interactions \cite{cardelli2012cell}, data replication in quorum systems~\cite{gifford1979weighted}, viral marketing and economic systems~\cite{optimization_maximize_sprad_of_influence}, and opinion forming~\cite{Gaertner2017_majority_torus}.

Motivated from modeling of the behavior of certain interacting particle systems like fluid flow in rocks~\cite{adler1988diffusion} and also some biological interactions \cite{cardelli2012cell}, these processes also have been extensively studied on $\mathbb{T}_n^d$ when the initial configuration is at random, that is each vertex is independently active with some probability $p$, and inactive otherwise. Roughly speaking, the main result is that there is some threshold value $p_c$ such that if $p$ is ``sufficiently'' larger (similarly smaller) than $p$ then the process gets (resp. does not get) fully active with a probability tending to one. In other words, $p_c$ is the threshold for the initial configuration to be a dynamo. For $r-$BP, after a large series of papers, e.g. see the results by Aizenman and Lebowitz~\cite{aizenman1988metastability}, Cerf and Cirillo~\cite{cerf1999finite}, and Holroyd~\cite{holroyd2003sharp}, finally Balogh, Bollobas, Duminil-Copin, and Morris~\cite{bootstrap_percolation_sharp_threshold_lattice} found the exact value of $p_c$. However, for reversible $r-$BP and the majority model the value of $p_c$ is known only for the special case of $d=2$~\cite{SCHONMANN1990619,Gaertner2017_majority_torus,gartner2017biased}. Our results might shed some light on how to find the value of $p_c$ for all dimensions. 

To recapitulate, we study the minimum size of a (monotone) dynamo in $r-$BP, reversible $r-$BP, and the majority model on the $d-$dimensional torus $\mathbb{T}_n^d$. Some lower and upper bounds are known by prior results, but several gaps and open problems have been left. Applying new techniques, we settle some open problems, including a problem by Balister et al.~\cite{balbol09}, and generalize some prior results, including the results by Flocchini et al.~\cite{flocchini04}. All in all, our results combined with prior results provide a complete picture of the minimum size of a (monotone) dynamo in $\mathbb{T}_n^d$ for all the aforementioned processes.

\subsection{Preliminaries and Basic Definitions}
Let $G=(V,E)$ be a simple graph with vertex set $V=\{v_1,\dots,v_N\}$ and let $\Omega=\{0,1\}^N$ be our binary state space. To the vertices of $G$ we assign an initial \emph{configuration} $\omega=(\omega_1,\dots,\omega_N)$ such that every vertex $v_i\in V$ has an initial \emph{state} $\omega_i\in\{0,1\}$. We say a vertex is \emph{active} if its state is 1, and \emph{inactive} if its state is 0.
Every function $f=f_G:\Omega\to\Omega$ that deterministically assigns a new state to every vertex in $G$ depending only on the states of its neighbors (plus itself) is called an \emph{updating rule}, where the \emph{neighborhood} of a vertex $v\in V$ is defined to be $N(v):=\{u\in V:\{v,u\}\in E\}$. Given an updating rule $f$ on $G$ and an initial configuration $\omega\in\Omega$, the updating rule $f$ induces a discrete time \emph{process} $\{\omega^{(t)}\}_{t\in\N}$ with $\omega^{(0)}\equiv\omega$ and $\omega^{(t)}=f(\omega^{(t-1)})$ for all $t\in\N$. 
%Since the state space is finite and the updating rule is deterministic, this process will always reach a cycle of periodic configurations after some time, which we call the \emph{consensus time} of the process. Throughout this article, if not specified otherwise, all graphs are assumed to be simple and equipped with a binary state space as described above. Moreover, we will use the convention that reversible rules (i.e. rules that can also change active vertices to inactive ones) are denoted with a left arrow as $\overleftarrow{f}$. Let us denote with $N(v)$ the neighborhood of a vertex $v\in V$ (not including $v$ itself). We are going to study the processes induced by the following updating rules.

%This model has applications in numerous scientific disciplines. It shows up in standard literature in physics in the context of interacting particle systems \cite{liggett} and also in computer science in the context of (probabilistic) cellular automata \cite{Ilachinski:2001_cellular_automata}. More specific applications include magnetization in Glauber dynamics \cite{ferromagnetism_glauber_dynamics}, electrical stimulation of neural networks \cite{bootstrap_neural_networks} and the spread of ``influence'' (e.g. an opinion or a disease) in social networks \cite{threshold_disease_opinion_spread, sociology_collective_behavior, sociology_global_cascades_random_networks}.

\begin{definition}[(reversible) $r-$BP rule]\label{def:irreversible_r-threshold_rule}
Let $r\in\N$ and $\omega\in\Omega$ a configuration on graph $G=(V,E)$.
The \emph{$r-$BP rule} $f(\cdot,r):\Omega\to\Omega$ is defined via
 \begin{equation*}\label{eq:irrev_r-threshold_rule}
    f(\omega_i,r):=\begin{cases}1 & \text{if }\sum\limits_{v_j\in N(v_i)} \omega_j\geq r\\ \omega_i & \text{otherwise}\end{cases},~~i=1,\dots,N.
\end{equation*}
The \emph{reversible $r-$BP rule} $\overleftarrow{f}(\cdot,r):\Omega\to\Omega$ is defined via
\begin{equation*}\label{eq:r-threshold_rule}
    \overleftarrow{f}(\omega_i,r):=\begin{cases}1 & \text{if }\sum\limits_{v_j\in N(v_i)} \omega_j\geq r\\ 0 & \text{otherwise}\end{cases},~~i=1,\dots,N.
\end{equation*}

\end{definition}
Given a graph with initial configuration $\omega^{(0)}$, $r-$BP rule then induces an \emph{$r-$BP process} $\{\omega^{(t)}\}_{t\in\N}$ via $\omega^{(t)}:=f(\omega^{(t-1)},r)$ for $t\in\N$, and analogously $\overleftarrow{f}$ induces a \emph{reversible $r-$BP process}. A somewhat similar rule is the \emph{majority rule}. This rule assigns to every vertex the most frequent state in its neighborhood, with self-preference in case of a tie.
\begin{definition}[majority rule]
For $\omega\in\Omega$, the \emph{majority rule} $\overleftarrow{f}(\cdot,maj):\Omega\to\Omega$ is given as
\begin{equation*}\label{eq:maj_rule}\overleftarrow{f}(\omega_i,maj):= \begin{cases} 
      1 & \text{if}\sum\limits_{v_j\in N(v_i)} \omega_j+\frac{w_i}{2} > |N(v_i)|/2 \\
      0 & \text{otherwise} 
   \end{cases}
,~~i=1,\dots,N.\end{equation*}
\end{definition}
The term $\frac{\omega_i}{2}$ is to make sure that in case of a tie a vertex preserves its state. As before, we say this rule induces a \emph{majority process}. 

For $\omega,\omega'\in\Omega=\{0,1\}^N$, we have $\omega\geq\omega'\Leftrightarrow \omega_i\geq \omega'_i~\forall i=1,\dots,N$. Furthermore, we denote $|\omega|:=\sum_{i=1}^{N}|\omega_i|$.

%All of the above updating rules share the crucial property that they are \emph{monotone}. We will frequently use this property implicitly in our proofs.

%\begin{definition}[monotone rule] An updating rule $f$ (or $\overleftarrow{f}$, for that matter) is \emph{monotone} if for $\omega,\omega'\in\Omega$ we have $\omega\geq\omega'\Rightarrow f(\omega)\geq f(\omega')$. \end{definition}

Let us now formally define our main object of interest, dynamo.

\begin{definition}[(monotone) dynamo]\label{def:monotone_dynamo}
Given a graph $G=(V,E)$ and an updating rule $f$, a \emph{dynamo} under $f$ is an initial configuration $\omega^{(0)}\in\Omega$ such that after some time $t\in\N$ in the process $\{\omega^{(t)}\}_{t\in\N}$ (induced by $f$ and $\omega^{(0)}$) we have $|\omega^{(t')}|=|V|$ for all $t'\geq t$. If $\omega^{(0)}$ is a dynamo and $\omega^{(t+1)}\geq\omega^{(t)}$ for all $t\geq 0$ then $\omega^{(0)}$ is called \emph{monotone}.
\end{definition}

Let $G$ be a graph and $r\in\N$ fixed. We denote
\begin{itemize}
    \item $m(G,r):=min\{|\omega|:\omega\text{ is a dynamo in $r-$BP on $G$}\}$
    \item $\overleftarrow{m}(G,r):=min\{|\omega|:\omega\text{ is a dynamo under reversible $r-$BP on $G$}\}$
    \item $\overleftarrow{m}_{mon}(G,r):=min\{|\omega|:\omega\text{ is a monotone dynamo under reversible $r-$BP on $G$}\}$.
\end{itemize}

By definition, a monotone dynamo is also a dynamo in reversible $r-$BP. Furthermore, a dynamo in reversible $r-$BP is a dynamo in $r-$BP. Thus, $m(G,r)\le \overleftarrow{m}(G,r)\le \overleftarrow{m}_{mon}(G,r)$.

\begin{definition}[stable]
Let $\{\omega^{(t)}\}$ be the process induced by updating rule $f$ on graph $G$ and the initial configuration $\omega^{(0)}$. We say an active vertex (analogously, a subset of active vertices or its induced subgraph) is \emph{stable} in configuration $\omega^{(t')}$ for some $t'\geq 0$ if it stays active in all upcoming configurations, i.e. $\omega^{(t)}$ for $t\geq t'$. %Otherwise, we say it is \emph{unstable}.
\end{definition}
Notice in $r-$BP, any set of active vertices in a configuration is stable. In reversible $r-BP$ and a $2d$-regular graph, including $\mathbb{T}_n^d$, for some set of active vertices $S$ in a configuration $\omega$ if each vertex in $S$ has at least $r$ neighbors in $S$ then $S$ is stable. Furthermore, let us state the following simple lemma which comes in handy several times later on.

\begin{lemma}
\label{stable-lemma}
In reversible $r$-BP or the majority model and a graph $G$, if for some initial configuration $\omega$ all active vertices are stable, every vertex which gets active during the process is stable.  
\end{lemma}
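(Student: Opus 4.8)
The plan is to show that, under the hypothesis, the entire process is monotone non-decreasing, i.e.\ $\omega^{(t)}\le\omega^{(t+1)}$ for all $t\ge 0$ (with respect to the pointwise order $\ge$ on $\Omega$ introduced above). From this the conclusion is immediate: if some vertex $v$ becomes active, say $\omega^{(t)}_v=1$, then monotonicity forces $\omega^{(t')}_v=1$ for all $t'\ge t$, so $v$ is stable.

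The crucial ingredient is that both the reversible $r$-BP rule $\overleftarrow{f}(\cdot,r)$ and the majority rule $\overleftarrow{f}(\cdot,maj)$ are \emph{monotone} with respect to $\ge$, meaning $\eta\ge\eta'$ implies $f(\eta)\ge f(\eta')$. I would verify this first, as it is the only place where the specific form of the rules enters. For the reversible $r$-BP rule the output at $v_i$ depends only on $\sum_{v_j\in N(v_i)}\eta_j$, which is nondecreasing in $\eta$, compared against the fixed threshold $r$; increasing any coordinate can only push a vertex over the threshold, never below it. For the majority rule the same reasoning applies to the quantity $\sum_{v_j\in N(v_i)}\eta_j+\eta_i/2$, which is nondecreasing in every coordinate (including $\eta_i$) and is compared against the fixed value $|N(v_i)|/2$.

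Next I would establish the base case $\omega^{(0)}\le\omega^{(1)}$. By hypothesis every vertex active in $\omega^{(0)}$ is stable, hence remains active in all later configurations and in particular in $\omega^{(1)}$; thus the active set of $\omega^{(0)}$ is contained in that of $\omega^{(1)}$, which is exactly $\omega^{(0)}\le\omega^{(1)}$. Applying monotonicity of $f$ then gives the inductive step: if $\omega^{(t)}\le\omega^{(t+1)}$, then $\omega^{(t+1)}=f(\omega^{(t)})\le f(\omega^{(t+1)})=\omega^{(t+2)}$. By induction the process is non-decreasing for all $t\ge 0$, and the lemma follows as described in the first paragraph.

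I do not expect a genuine obstacle here; the argument is a short bootstrap from one base-case inequality via monotonicity of $f$. The only point requiring care is confirming monotonicity of the majority rule, where one must check that the self-preference term $\eta_i/2$ preserves rather than breaks monotonicity — which it does, precisely because $\eta_i/2$ is itself nondecreasing in the coordinate $\eta_i$.
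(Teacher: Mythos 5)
Your proof is correct, but it is organized differently from the paper's. The paper argues vertex-by-vertex: initially active vertices are stable by hypothesis, and whenever a vertex becomes active it does so because it has at least $r$ active neighbors, each of which is stable by the induction hypothesis, so the vertex keeps at least $r$ active neighbors forever and is itself stable. You instead prove the global statement $\omega^{(t)}\le\omega^{(t+1)}$ for all $t$ by induction, using only two facts: the base inequality $\omega^{(0)}\le\omega^{(1)}$ (a consequence of the stability hypothesis) and the monotonicity of the update map $\eta\ge\eta'\Rightarrow f(\eta)\ge f(\eta')$, which you correctly verify for both the reversible $r$-BP rule and the majority rule (including the self-preference term $\omega_i/2$, which is itself nondecreasing in $\omega_i$). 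Your route is rule-agnostic — it applies verbatim to any monotone updating rule — and it in fact uses only the weaker hypothesis that no active vertex is lost in the \emph{first} step, rather than full stability of the initial active set, so it proves a marginally stronger statement. The paper's argument avoids formalizing monotonicity of the map but must inspect \emph{why} a vertex activates; as written there ("at least $r$ stable neighbors") it is phrased for the threshold rule and needs a one-line adaptation for the majority tie-breaking, which your version sidesteps. Both arguments are equally short and both are valid.
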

This is simply true by an inductive argument. Initially all active vertices are stable and every time a vertex gets active it is stable since it has at least $r$ stable neighbors. This immediately implies that a dynamo whose all active vertices are stable is a monotone dynamo. 

%Now, let us make the following simple observation regarding the relation of stability and monotonicity of a dynamo, which comes in handy several times later on.
%\begin {lemma}\label{lem:momotone_dynamo_stable} Let $G$ be a graph, and $\omega$ a dynamo under $f$, where $f$ is one of the above updating rules. Then $\omega$ is monotone  if and only if all active vertices in $\omega$ are stable under $f$.\end {lemma}

%One direction is trivial and true for any updating rule since by the definition of a monotone dynamo all its nodes must be stable. The second direction follows simply by an inductive argument. We argue only for reversible $r-$BP. Let $\omega$ be a dynamo in reversible $r-$BP; all nodes in $\omega$ are stable by assumption. Now, each node which is activated by $\omega$ will be stable since it has at least $r$ stable neighbors.

\begin{definition}[torus]\label{def:torus}
The $d-$dimensional \emph{torus} is the graph $\mathbb{T}_n^d=(V,E)$ with vertex set $V:=\{(x_1,\dots,x_d):1\leq x_1,\dots,x_d\leq n\}$ and edge set $$E:=\{(x,x')\in V\times V: |x_j-x'_j\mod{n}|= 1~\text{for some $j$ and}~x_k=x'_k~\forall\ 1\leq k\neq j\leq n\}.$$
\end{definition}

The ``mod $n$'' means that the term is reduced modulo $n$. We always assume $d$ is a fixed number as we let $n$ tend to infinity. Note that $\mathbb{T}_n^d$ is a $2d-$regular graph with $n^d$ vertices. Moreover, given a vertex, we can obtain its neighbors simply by adding $\pm 1$ (mod $n$) to any of its coordinates. 

\section{Results}
We divide our results into two main parts, namely for large $r$ (i.e., $d+1\leq r\leq 2d$) and small $r$ (i.e., $1\leq r\leq d$), which are presented in Section~\ref{sec:r_large} and Section~\ref{sec:small_r}, respectively. We then extend our results to the majority process in Section \ref{section:majority_dynamo}. %At last, we draw a comparison between (reversible) $r-$BP on the torus and on the lattice in Section \ref{sec:lattice_extension}.
As a first step, we present an important relation between the size of a smallest dynamo in $r-$BP and reversible $r-$BP on bipartite graphs. %See again Definition \ref{def:min_dynamo}.
\begin{theorem}\label{thm:bipartite_reversible_vs_irreversible}
Let $G=(V\cup U,E)$ be a bipartite graph of minimum degree at least $r$. Then $\overleftarrow{m}(G,r)\geq 2m(G,r)$.
\end {theorem}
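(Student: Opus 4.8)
The plan is to exploit the bipartite structure to split a reversible dynamo into its two colour classes and show that each class, on its own, already encodes an $r$-BP dynamo. Write an initial configuration as $\omega=(\omega_V,\omega_U)$ according to the partition $V\cup U$, and let $\{\omega^{(t)}\}$ be the reversible $r$-BP process it generates. The starting point is the decoupling noted in the introduction: since every neighbour of a $V$-vertex lies in $U$ and vice versa, the new states on $V$ depend only on the current states on $U$, and symmetrically. I would record this as $\omega^{(t+1)}_V=g_V(\omega^{(t)}_U)$ and $\omega^{(t+1)}_U=g_U(\omega^{(t)}_V)$ for the two threshold maps $g_V,g_U$. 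The goal is to prove $|\omega_V|\ge m(G,r)$ and, by a symmetric argument, $|\omega_U|\ge m(G,r)$; summing yields $|\omega|\ge 2m(G,r)$, and minimizing over reversible dynamos gives the claim.

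Two ingredients drive the argument. First, a monotone domination: for a common seed, the $r$-BP process dominates the reversible one at every step. This holds because the $r$-BP map $f(\cdot,r)$ is monotone and satisfies $f(\sigma,r)\ge\overleftarrow{f}(\sigma,r)$ coordinatewise for every $\sigma$ (the two maps agree when the threshold is met, and otherwise $f$ keeps the old state while $\overleftarrow{f}$ forces $0$); a one-line induction then gives $\alpha^{(t)}\ge\beta^{(t)}$ whenever $\alpha^{(0)}=\beta^{(0)}$, where $\alpha$ and $\beta$ denote the $r$-BP and reversible processes. Second, a tracking construction: I would run the reversible process from the seed $\tilde\omega=(\omega_V,0)$ — the original active set on $V$, nothing on $U$ — and verify by induction the invariant that $\tilde\omega^{(t)}_V=\omega^{(t)}_V$ with $\tilde\omega^{(t)}_U=0$ for even $t$, and $\tilde\omega^{(t)}_U=\omega^{(t)}_U$ with $\tilde\omega^{(t)}_V=0$ for odd $t$. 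The induction is immediate from the decoupling maps together with $g_V(0)=g_U(0)=0$ (using $r\ge1$), so $\tilde\omega$ exactly reproduces the ``$V$-seeded diagonal'' of the original dynamics while the complementary side is held empty.

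Now I would combine the two ingredients. Because $\omega$ is a reversible dynamo, there is a time after which every vertex is active; in particular $\omega^{(t)}_V=V$ at some large even time $t_0$ and $\omega^{(t)}_U=U$ at some large odd time $t_1$. Through the tracking invariant these equal $\tilde\omega^{(t_0)}_V$ and $\tilde\omega^{(t_1)}_U$, so the reversible process from $\tilde\omega$ fills $V$ at time $t_0$ and fills $U$ at time $t_1$. Applying the domination with seed $\tilde\omega$, the corresponding $r$-BP process fills $V$ at $t_0$ and fills $U$ at $t_1$; and since active vertices in $r$-BP never deactivate, both halves stay full thereafter, so $\tilde\omega$ is an $r$-BP dynamo of size $|\omega_V|$. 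Hence $|\omega_V|\ge m(G,r)$, and the symmetric run from $(0,\omega_U)$ gives $|\omega_U|\ge m(G,r)$.

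I expect the main obstacle to be conceptual rather than computational: the seed $\tilde\omega$ is itself \emph{not} a reversible dynamo — its reversible orbit oscillates forever between ``$V$ full, $U$ empty'' and ``$U$ full, $V$ empty'' — so one cannot simply invoke ``a reversible dynamo is an $r$-BP dynamo.'' The real content is that the domination inequality, applied one colour class at a time, lets the irreversible process latch onto $V$ at even times and onto $U$ at odd times separately, after which the monotonicity of $r$-BP glues these two one-sided successes into genuine full activation. The only routine care needed is checking the tracking invariant and the coordinatewise domination $f\ge\overleftarrow{f}$; the minimum-degree-$r$ hypothesis enters only to guarantee that the all-active configuration is reversibly stable, so that reversible dynamos exist and the statement is non-vacuous.
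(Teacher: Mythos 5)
Your proof is correct and follows essentially the same route as the paper: restrict the reversible dynamo to a colour class, use the even/odd decoupling of the bipartite dynamics to show that this restriction is already an $r$-BP dynamo, and conclude. The only differences are refinements of execution --- you treat both classes symmetrically and sum the two bounds (so the minimum-degree hypothesis is needed only for non-vacuousness, whereas the paper takes the smaller class and uses the degree condition to activate $U$ from a full $V$ in one extra round), and you make explicit the coordinatewise domination $f(\cdot,r)\geq\overleftarrow{f}(\cdot,r)$ that the paper's argument uses implicitly.
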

\begin{proof} Note that the minimum degree condition is required because otherwise even a fully active configuration is not a dynamo. Let $\omega$ be a dynamo in $G$ under reversible $r-$BP. We construct a dynamo $\omega'$ of size $|\omega'|\leq |\omega|/2$ for $r-$BP. Without loss of generality, assume the number of active vertices in $V$ and configuration $\omega$ is at most $|\omega|/2$. Let $\omega'$ be the configuration where the state of each vertex in $V$ is identical to the one in $\omega$ and all vertices in $U$ are inactive. It suffices to show $\omega'$ is a dynamo in $r-$BP.
To this end, we use the following observation. Let $t'\in\N$ and consider the states of vertices in $U$ at time $t'$. Since all vertices in $U$ are only connected to vertices in $V$, we observe that under reversible $r-$BP rule, the states of vertices in $U$ at time $t'$ depend only on those in $V$ at time $t'-1$. Similarly, the states of the vertices in $V$ at time $t'+1$ depend only on the states of the vertices in $U$ at time $t'$. Hence, if $t$ is even then the states of the vertices in $V$ at time $t$ depend only on the initial states (time zero) of vertices in $V$. Clearly, $\omega$ is a dynamo also in $r-$BP; let $T$ be the first even time in which the whole graph is active in $r-$BP. By putting this point in parallel with the aforementioned observation, if $r-$BP process starts from $\omega'$, all vertices in $V$ will be active at time $T$ because the states of the vertices in $V$ only depend on their initial states, which are the same in $\omega$ and $\omega'$. In the next round of $r-$BP process induced by $\omega'$, all vertices in $V$ stay active and all vertices in $U$ are activated since they all have at least $r$ neighbors in $V$. This implies that $\omega'$ is a dynamo of size at most $|\omega|/2$ in $r-$BP. 
 %Since $\omega$ is a dynamo in reversible $r-$BP, eventually all vertices in $A$ are active stable. Let $T$ be the first even time for which this happens. Again, since $\omega$ is a dynamo, the vertices in $A$ are active at time $T$ and since $T$ is even, this only depends on the initial states of the vertices in $W\cap A$. Let now $\omega'$ be the configuration that makes active precisely the vertices in $W\cap A$. Then $\omega'$ is a dynamo in $r-$BP. To see this, note that under $\omega'$ the set $A$ is active at time $T$, since this is true in reversible $r-$BP process. But if $A$ is active at time $T$ then $B$ is active at time $T+1$, since every vertex in $B$ has at least $r$ active neighbors in $A$. Since in $r-$BP all vertices in $A$ stay active, this means that $\omega'$ is a dynamo in $r-$BP, and $|\omega'|=|W\cap A|\leq |W|/2=|\omega|/2$.
\qed
\end{proof}

%On tori, essentially the same result as in Theorem \ref{thm:bipartite_reversible_vs_irreversible} also holds as we will see soon (Theorem \ref{thm:irrev_vs_rev_threshold_dynamo_on_torus}). It even holds with equality, as we will show in Theorem \ref{thm:threshold_dynamos_large_r}.
In the following sections, we will repeatedly make use of the following notations for vertex sets on $\mathbb{T}_n^d$. For $i=0,1$ we let $A_{i}:=\{x\in\mathbb{T}_n^d:|x|\equiv i\mod{2}\}$ be the set of all vertices in the torus whose coordinates sum to an even or odd number, respectively. If $n$ is even then this gives a bipartition of $\mathbb{T}_n^d$ into a ``checker board-pattern''. We sometimes abuse the notation by writing the name of a graph instead of its vertex set. For $j=d,d+1,\dots,dn$ we denote the set of vertices whose coordinates sum to $j$ as $B_j:=\{x\in\mathbb{T}_n^d:|x|=j\}$. Note that $B_j\subset A_{i}$ if and only if $j\equiv i$ mod 2.
Next, to a given $r\leq d$ we let
\begin{equation}\label{eq:k_r}K(r):=\{(j_1,\dots,j_{d-r+1}): 1\leq j_1<\dots<j_{d-r+1}\leq d\}.\end{equation} 
%be the set of all $(d-r+1)-$tuples of distinct indices ranging from 1 to $d$. 
Note that $|K(r)|=\binom{d}{d-r+1}=\binom{d}{r-1}$. For $d\geq 2$, $2\leq r\leq d$ and $k\in K(r)$ we define %\begin{equation}\label{eq:t_k}T(k):=\{x\in\mathbb{T}_n^d:x_j=1~\forall j=k_1,\dots,k_{d-r+1}\}.\end{equation} These vertices induce an $(r-1)-$dimensional sub-torus. Lastly, to a given $d$ and $r$ we let
\begin{equation}\label{eq:smalldynamoconstruction}
    T(k):=\{x\in\mathbb{T}_n^d:x_j=1~\forall j=k_1,\dots,k_{d-r+1}\}~~\text{  and  }~~D(r):=\bigcup_{k\in K(r)} T(k).
\end{equation}
%In the case $r=2$, the constraints allow for one ``degree of freedom'', so $D(2)$ consists of one ``column'' in every dimension. 
The vertices of $T(k)$ induce an $(r-1)-$dimensional sub-torus in $\mathbb{T}_n^d$. As an example, if $d=r=2$ then $D(r)$ is simply the union of the first ``row'' and the first ``column'' of $\mathbb{T}_n^2$.
\subsection{(Monotone) dynamos for large r}\label{sec:r_large}
As we discussed, Balister et al.~\cite{balbol09} proved $m(\mathbb{T}_n^d,r)=(1-\frac{d}{r})n^d\pm\Theta(n^{d-1})$ for large $r$.
Building on Theorem~\ref{thm:bipartite_reversible_vs_irreversible}, we first prove $2(1-\frac{d}{r})n^d-\Theta(n^{d-1})\le \overleftarrow{m}(\mathbb{T}_n^d,r)$ in Theorem~\ref{thm:irrev_vs_rev_threshold_dynamo_on_torus}. We then show $\overleftarrow{m}_{mon}(\mathbb{T}_n^d,r)\leq2(1-\frac{d}{r})n^d+\Theta(n^{d-1})$ in Theorem~\ref{thm:threshold_dynamos_large_r} by constructing a monotone dynamo of this size in reversible $r-$BP. Therefore, by applying $\overleftarrow{m}(\mathbb{T}_n^d,r)\le \overleftarrow{m}_{mon}(\mathbb{T}_n^d,r)$, we have both $\overleftarrow{m}(\mathbb{T}_n^d,r)$ and $\overleftarrow{m}_{mon}(\mathbb{T}_n^d,r)$ are equal to $2(1-\frac{d}{r})n^d\pm\Theta(n^{d-1})$.
%In this section we study monotone dynamos in $\mathbb{T}_n^d$ under reversible $r-$BP and for $r>d$. As mentioned before, this case was studied specifically on $\mathbb{T}_n^2$ by Flocchini et al. \cite{flocchini04}. %Moreover, Balister et al. \cite{balbol09} considered irreversible threshold rules on $\mathbb{T}_n^d$.
%Our main result is the construction of a monotone reversible $r-$BP dynamo in $\mathbb{T}_n^d$ of size
%$2*(1-\frac{d}{r})n^d+\mathcal{O}(n^{d-1})$. This implies that for any $d<r<2d$ it holds
%$$2*(1-\frac{d}{r})n^d+\mathcal{O}(n^{d-1})=\overleftarrow{m}_{mon}(\mathbb{T}_n^d,r)\geq \overleftarrow{m}(\mathbb{T}_n^d,r)\geq 2*(1-\frac{d}{r})n^d+\mathcal{O}(n^{d-1}),$$
%where the second inequality follows from Theorem 26 in \cite{balbol09} and our Theorem \ref{thm:irrev_vs_rev_threshold_dynamo_on_torus}.
%This means that $\overleftarrow{m}(\mathbb{T}_n^d,r)+\mathcal{O}(n^{d-1})=2m(\mathbb{T}_n^d,r)$ for $r>d$, which answers a question from Balister, Bollobás, Johnson and Walters \cite{balbol09}. We first show that Theorem \ref{thm:bipartite_reversible_vs_irreversible} holds for all tori, which then together with Theorem 25 from the aforementioned publication gives the desired lower bound on $\overleftarrow{m}(\mathbb{T}_n^d,r)$.
\begin{theorem}\label{thm:irrev_vs_rev_threshold_dynamo_on_torus}
For all $d\geq 1$ and $d+1\leq r\leq 2d$ it holds $2(1-\frac{d}{r})n^d-\Theta(n^{d-1})\le \overleftarrow{m}(\mathbb{T}_n^d,r)$.
\end{theorem}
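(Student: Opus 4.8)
The plan is to reduce the statement to the bipartite inequality of Theorem~\ref{thm:bipartite_reversible_vs_irreversible} together with the known value $m(\mathbb{T}_n^d,r)=(1-\frac{d}{r})n^d\pm\Theta(n^{d-1})$ of Balister et al. The key observation is that $\mathbb{T}_n^d$ is bipartite precisely when $n$ is even, so the argument splits into two cases, and essentially all of the work lies in recovering the factor of two when $n$ is odd, where bipartiteness fails.

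First I would dispose of the even case. Here the parity classes $A_0$ and $A_1$ form a bipartition of $\mathbb{T}_n^d$: every edge changes the coordinate sum by $\pm 1$, and across a wrap-around edge it changes it by $\pm(n-1)$, which is again odd when $n$ is even, so each edge flips the parity of $|x|$. Since $\mathbb{T}_n^d$ is $2d$-regular and $r\le 2d$, its minimum degree is at least $r$, so Theorem~\ref{thm:bipartite_reversible_vs_irreversible} applies and gives $\overleftarrow{m}(\mathbb{T}_n^d,r)\ge 2m(\mathbb{T}_n^d,r)=2(1-\frac{d}{r})n^d-\Theta(n^{d-1})$, which is exactly the claimed bound.

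The remaining and harder case is odd $n$, where $\mathbb{T}_n^d$ is not bipartite and Theorem~\ref{thm:bipartite_reversible_vs_irreversible} cannot be invoked directly. My plan is to pass to the bipartite torus $\mathbb{T}_{2n}^d$ via the coordinatewise reduction map $\pi:\mathbb{T}_{2n}^d\to\mathbb{T}_n^d$, $\pi(x)=x\bmod n$. For $n\ge 3$ this is a graph covering of degree $2^d$: it carries the $2d$ neighbors $x\pm e_i$ of a vertex bijectively onto the $2d$ neighbors of $\pi(x)$, with no collisions because $2\not\equiv 0\bmod n$. Since a covering map is locally bijective, once we lift a configuration $\omega$ on $\mathbb{T}_n^d$ to $\tilde\omega:=\omega\circ\pi$ on $\mathbb{T}_{2n}^d$, every vertex and its image always see the same number of active neighbors; hence the reversible $r$-BP rule commutes with $\pi$, i.e.\ $\tilde\omega^{(t)}=\omega^{(t)}\circ\pi$ for all $t$. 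In particular, lifting a smallest reversible dynamo of $\mathbb{T}_n^d$ yields a reversible dynamo of $\mathbb{T}_{2n}^d$ of size exactly $2^d\,\overleftarrow{m}(\mathbb{T}_n^d,r)$, so that $\overleftarrow{m}(\mathbb{T}_{2n}^d,r)\le 2^d\,\overleftarrow{m}(\mathbb{T}_n^d,r)$.

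Finally I would chain this with the even case applied to $\mathbb{T}_{2n}^d$. Because $2n$ is even, that case gives $\overleftarrow{m}(\mathbb{T}_{2n}^d,r)\ge 2(1-\frac{d}{r})(2n)^d-\Theta((2n)^{d-1})=2^{d+1}(1-\frac{d}{r})n^d-\Theta(n^{d-1})$. Combining the two bounds and dividing by $2^d$ yields $\overleftarrow{m}(\mathbb{T}_n^d,r)\ge 2(1-\frac{d}{r})n^d-\Theta(n^{d-1})$, as required; notice that the covering degree $2^d$ cancels exactly against the $2^d$ arising from the volume of $\mathbb{T}_{2n}^d$. The main obstacle is the odd case, and within it the one step requiring genuine care is verifying that the \emph{reversible} (non-monotone) dynamics is preserved under the lift: this is where local bijectivity of $\pi$ is essential, and it is precisely what allows the clean factor-of-two reduction to survive the loss of bipartiteness.
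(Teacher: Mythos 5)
Your proof is correct, and while the even-$n$ case coincides with the paper's (apply Theorem~\ref{thm:bipartite_reversible_vs_irreversible} to the bipartition $A_0\cup A_1$ together with the known lower bound on $m(\mathbb{T}_n^d,r)$), your treatment of odd $n$ is genuinely different. The paper embeds $\mathbb{T}_n^d$ into $\mathbb{T}_{n+3}^d$, which has even side length, and pads a smallest reversible dynamo with a fully active shell $C=[n+3]^d\setminus[n]^d$ of size $\Theta(n^{d-1})$; since the subgraph induced by $C$ has minimum degree $2d-1$, the shell is stable for $r\le 2d-1$, the padded configuration is a dynamo of the larger torus, and the even-side bound transfers back at an additive $\Theta(n^{d-1})$ cost. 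This forces the paper to dispose of $r=2d$ separately (there the only dynamo is the all-active one). Your argument instead lifts along the $2^d$-to-one projection $\mathbb{T}_{2n}^d\to\mathbb{T}_n^d$, which for $n\ge 3$ is locally bijective, so that any neighborhood-determined update rule commutes with the pullback and a dynamo lifts to a dynamo of exactly $2^d$ times the size; the factor $2^d$ then cancels against the volume ratio $(2n)^d/n^d$. This buys a cleaner reduction with no boundary padding, no special case for $r=2d$, and the reusable inequality $\overleftarrow{m}(\mathbb{T}_{2n}^d,r)\le 2^d\,\overleftarrow{m}(\mathbb{T}_n^d,r)$; the paper's padding trick, by contrast, reuses the stable-boundary machinery it already needs elsewhere. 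The one step in your proof requiring care, namely that $\tilde\omega^{(t)}=\omega^{(t)}\circ\pi$ for all $t$, is exactly where you place the emphasis, and your justification via local bijectivity of the covering is sound.
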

\begin{proof}
The statement is true for $r=2d$ because the only dynamo is the configuration in which all vertices are active. 

When $n$ is even then $\mathbb{T}_n^d$ is bipartite, and we also know that $m(\mathbb{T}_n^d,r)\geq(1-\frac{d}{r})n^d-\Theta(n^{d-1})$ (see Table~\ref{Table 1}). Thus, applying Theorem~\ref{thm:bipartite_reversible_vs_irreversible} implies our claim. 

For odd $n$, consider the torus $\mathbb{T}_{n+3}^d$. Let $\omega'\in\{0,1\}^{n^d}$ be a smallest dynamo in $\mathbb{T}_n^d$ under reversible $r-$BP. Let $\omega$ be the configuration in $\mathbb{T}_{n+3}^d$ whose entries are taken from $\omega'$ for the vertices in $[n]^d$, and the remaining vertices in $C:=[n+3]^d\backslash [n]^d$ are active. Note that the induced subgraph on $C$ has minimum degree $2d-1$, so all vertices in $C$ stay active in reversible $r-$BP for all $r\leq 2d-1$. Since $\omega'$ is a dynamo for reversible $r-$BP in $\mathbb{T}_{n}^d$ we have that $\omega$ is a dynamo in $\mathbb{T}_{n+3}^d$; this is because vertices in $C$ remain stable and so those vertices in $[n+3]^d\backslash C$ have at least as many active neighbors in every round of reversible $r-$BP process as they had in the corresponding process in $\mathbb{T}_n^d$. Since $|C|\leq 3d(n+3)^{d-1}=\Theta(n^{d-1})$, we have $|\omega|\le |\omega'|+\Theta(n^{d-1})$. Furthermore, since $n+3$ is even, the size of any dynamo in $\mathbb{T}_{n+3}^d$, including $\omega$, is lower-bounded by $2(1-\frac{d}{r})(n+3)^d-\Theta(n^{d-1})$. Therefore, $2(1-\frac{d}{r})(n+3)^d-\Theta(n^{d-1})\leq |\omega'|$, where $\omega'$ is a smallest dynamo in $\mathbb{T}_n^d$.
%Let $\tilde{\omega},\tilde{\omega}'\in\{0,1\}^{(n-3)^d}$ be configurations that induce dynamos in $\mathbb{T}_{n-3}^d$ under reversible $r-$BP and $r-$BP, respectively. We construct two configurations $\omega,\omega'\in\{0,1\}^{n^d}$ that give dynamos in $\mathbb{T}_n^d$ in the respective processes. Let's start with reversible $r-$BP. Let $\omega$ be the configuration whose entries are taken from $\tilde{\omega}$ for the vertices in $[n-3]^d$, and where the remaining vertices in $B:=[n]^d\backslash [n-3]^d$ are active. Note that $B$ has minimum degree $2d-1$, so it stays active in reversible $r-$BP for all $r\leq 2d-1$. Since $\tilde{\omega}$ was a dynamo for $r-$BP in $\mathbb{T}_{n-3}^d$ we have that $\omega$ is a dynamo in $\mathbb{T}_n^d$; this is because vertices in $B$ remain stable active and so those vertices in $[n^d]\backslash B$ have at least as many active neighbors in every step of reversible $r-$BP process as they had in the corresponding process in $\mathbb{T}_{n-3}^d$.
%As to $r-$BP, we construct $\omega'$ in the same way as we constructed $\omega$, and it follows analogously that $\omega$ is an $r-$BP dynamo.
%Lastly, because $|W|\leq 3dn^{d-1}=\Theta(n^{d-1})$ the claim on the minimum sizes of $\omega$ and $\omega'$ follows from Theorem \ref{thm:bipartite_reversible_vs_irreversible} by construction.
\qed
\end{proof}

We now provide a (monotone) dynamo in reversible $r-$BP of size $2(1-\frac{d}{r})n^d+\Theta(n^{d-1})$, where we rely on some construction from~\cite{balbol09}.
%We now give the upper bound construction. We do this by showing that Theorem \ref{thm:irrev_vs_rev_threshold_dynamo_on_torus} is tight on $\mathbb{T}_n^d$ for $d< r\leq 2d$ even for monotone dynamos. %We will later on use the construction from the subsequent Theorem \ref{thm:threshold_dynamos_large_r} also for (reversible) $r-$BP and $1\leq r\leq d$.
\begin{theorem}\label{thm:threshold_dynamos_large_r}
For $d\geq 1$ and $d+1\le r\leq 2d$ it holds $\overleftarrow{m}_{mon}(\mathbb{T}_n^d,r)\le 2(1-\frac{d}{r})n^d+\Theta(n^{d-1}).$
\end{theorem}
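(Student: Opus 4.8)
The plan is to reduce the construction of a monotone dynamo to that of a \emph{stable} percolating set, and then to build such a set by ``thickening'' the irreversible dynamo of Balister et al.~\cite{balbol09}. The key observation is that, by Lemma~\ref{stable-lemma}, an initial configuration whose active set $S_0$ is stable (i.e.\ every vertex of $S_0$ has at least $r$ neighbors in $S_0$) evolves under reversible $r-$BP exactly as under (irreversible) $r-$BP, and in particular the process is monotone. Hence it suffices to exhibit a set $S_0\subseteq\mathbb{T}_n^d$ with three properties: (i) $S_0$ is stable; (ii) $S_0$ contains a dynamo for $r-$BP; and (iii) $|S_0|=2(1-\tfrac{d}{r})n^d+\Theta(n^{d-1})$. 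Property (ii) is cheap: since $r-$BP is monotone in its initial configuration, any superset of a percolating set is again percolating, so I am free to take $S_0\supseteq P$ for the explicit $r-$BP dynamo $P$ of size $(1-\tfrac{d}{r})n^d+\Theta(n^{d-1})$ provided by~\cite{balbol09}. Consequently percolation comes for free, and the entire difficulty is concentrated in making $S_0$ stable while keeping $|S_0|$ within a factor $2$ of $|P|$ (the factor $2$ being forced by the lower bound in Theorem~\ref{thm:irrev_vs_rev_threshold_dynamo_on_torus}).

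Concretely, I would first recall the periodic construction of $P$ from~\cite{balbol09} and record, for each $v\in P$, how many of its $2d$ neighbors lie in $P$; the vertices falling short of the threshold $r$ are exactly those I must repair. The repair is carried out by adjoining a second, suitably positioned copy $P'$ of the pattern (for instance a translate by a unit vector, which sends a vertex into the opposite parity class $A_{1-i}$), arranged so that every vertex short of active neighbors in $P$ recovers the missing ones from $P'$; then $S_0:=P\cup P'$ is stable and $|S_0|\le 2|P|=2(1-\tfrac{d}{r})n^d+\Theta(n^{d-1})$. A convenient bookkeeping device is to pass to the complement $I:=\mathbb{T}_n^d\setminus S_0$: stability of $S_0$ says precisely that every vertex outside $I$ has at most $s:=2d-r$ neighbors in $I$, while $S_0\supseteq P$ forces $I$ to sit inside the peelable complement of $P$, which is $s$-degenerate and hence satisfies $e(I,I)\le s|I|$. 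Double-counting the edge boundary $e(S_0,I)=2d|I|-2e(I,I)$ against the stability bound $e(S_0,I)\le s|S_0|$ then yields $(2d-s)|I|\le s\,n^d$, i.e.\ $|I|\le \tfrac{s}{r}n^d=(2\tfrac{d}{r}-1)n^d$, which is exactly the target inactive density. The construction must therefore be essentially optimal, with nearly every boundary vertex of $S_0$ seeing exactly $s$ inactive neighbors; the same computation also reassures me that the parity of $n$ plays no role here, in contrast with the non-monotone case.

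The main obstacle is the simultaneous satisfaction of two constraints that pull in opposite directions: $I$ must be ``thin and peelable'' so that the copy of $P$ inside $S_0$ still percolates, yet $S_0$ must be ``thick and smooth-boundaried'' so that no vertex of $S_0$ touches more than $s$ vertices of $I$. These conflict most sharply when $s=2d-r$ is small (already for $d=2$, $r=3$, where $s=1$): peelability forbids $I$ from containing dense cores, while stability forbids any vertex of $S_0$ from having two inactive neighbors, and reconciling both at density $\tfrac{s}{r}$ is possible only up to defects. It is precisely these $\Theta(n^{d-1})$ defect vertices --- occurring along the ``seams'' where the periodic pattern is cut to break the long cycles that wrap around the torus, and where $P$ meets its shifted copy $P'$ --- that account for the $\Theta(n^{d-1})$ error term. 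I would therefore devote most of the effort to a finite case check on the local neighborhood patterns of the periodic construction, verifying that stability holds off an $O(n^{d-1})$-sized seam, and separately confirming (either via the ``superset of $P$'' shortcut or by exhibiting an explicit order in which the inactive vertices activate one by one) that the seams neither destabilize their own neighborhoods nor obstruct percolation.
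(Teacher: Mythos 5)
Your skeleton is the same as the paper's: exhibit a \emph{stable} set of size $2(1-\tfrac{d}{r})n^d+\Theta(n^{d-1})$ that percolates, and get monotonicity from Lemma~\ref{stable-lemma}. Your shortcut for percolation --- a stable superset of an irreversible percolating set $P$ percolates monotonically under the reversible rule, since once all active vertices are stable the reversible process dominates the irreversible one started from the same set, and irreversible $r$-BP is monotone in its initial configuration --- is valid and in fact a little slicker than the paper, which instead re-runs an explicit induction showing the level sets $B_j$ activate one by one using the property that every $x\in B_j\setminus(S\cup H)$ has $d$ active neighbors in $B_{j-1}$ and $r-d$ further neighbors in $S\setminus B_{j-1}$. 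Your closing double-count correctly reproduces the matching lower bound (Lemma~\ref{lem:monotone_dynamo_threshold}), but that is a consistency check, not part of the upper bound.

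The gap is that the entire content of the theorem --- the existence of the stable set of the right size --- is asserted rather than established. The paper's set is $S=\{x:f(x)\in\{1,\dots,2(r-d)\}\}$ for a linear form $f$ whose coefficients realize all nonzero residues of the relevant modulus; the neighbor count in Lemma~\ref{lem:construction_of_S} gives exactly $r$ neighbors inside $S$ precisely because the residue window has size $2(r-d)$. Your $S_0=P\cup P'$ is such a set only if the translate $v$ is chosen so that the two residue windows $\{1,\dots,r-d\}$ and $\{1+f(v),\dots,r-d+f(v)\}$ are disjoint; a generic unit translate (any $v$ with $|f(v)|\le r-d-1$, in particular $f(v)=0$) yields a union of density strictly below $2(1-\tfrac{d}{r})$, and then some vertex of $P$ never recovers its missing neighbors and the whole configuration can die. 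Your parity remark does not help here: $P$ is defined by residues of a linear form modulo $r$ (or $r/2$), not modulo $2$, so translating into the opposite class $A_{1-i}$ says nothing about how the two copies interlock. The ``finite case check on local neighborhood patterns'' you defer to is exactly Lemma~\ref{lem:construction_of_S}, i.e.\ the heart of the proof; without pinning down the translate and verifying that every vertex of $P\cup P'$ reaches degree $r$ inside the union, the argument is a plan rather than a proof. A smaller point in the same spirit: the seam set must itself be stable, which is why the paper takes two adjacent hyperplanes per direction ($H=\bigcup_j\{x:x_j\le 2\}$, giving every seam vertex $2d-1\ge r$ neighbors in $H$); a single hyperplane per direction would not survive the reversible rule for $r=2d-1$.
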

%This in particular implies that for $d$ and $r$ in this range, it holds $\overleftarrow{m}_{mon}(\mathbb{T}_n^d,r)=\overleftarrow{m}(\mathbb{T}_n^d,r)+\mathcal{O}(n^{d-1})$.
\begin{proof} 
The configuration in which all vertices are active is of size $n^d$ and is a dynamo for $r=2d$. Thus, the statement holds for $r=2d$ and we may assume $d\geq 2$ with impunity.
%Let us assume that $n$ is even: if not then we simply make all vertices in $A:=[n]^d\backslash[n-3]^{d}$ active and only consider the remaining subgraph $\mathbb{T}_{n-3}^d$. This is not an issue because $|A|\leq 3dn^{d-1}=\mathcal{O}(n^{d-1})$ and the newly active vertices induce a stable active subgraph of minimum degree $2d-1$ as in the proof of Theorem \ref{thm:irrev_vs_rev_threshold_dynamo_on_torus}. Recall that the vertex set of $\mathbb{T}_n^d$ is the set $\{1,\dots,n\}^d$.

Define $H:=\bigcup_{j=1}^d\{x\in\mathbb{T}_n^d: x_j\leq 2\}$, which is the union of the vertices in $d$ ``perpendicular'' pairs of neighboring $(d-1)-$dimensional tori in $\mathbb{T}_n^d$ (so that $|H|\leq 2dn^{d-1}$). Every vertex in $H$ has at least $2d-1$ neighbors in $H$, so if we make the vertices in $H$ active initially then they stay active in reversible $r-$BP for all $r\leq 2d-1$, i.e. they are stable. 

In Lemma \ref{lem:construction_of_S} in the appendix, we discuss that there exists a vertex set $S\subset[n]^d$ with the following properties. (The construction of $S$ and the proof of the respective properties come from the proof of Theorem 26 in~\cite{balbol09}).
\begin{enumerate}
    \item Every vertex in $S$ has precisely $r$ neighbors in $S$.
    %\item Every vertex not in $S$ has precisely $2(r-d)$ neighbors in $S$. Moreover, if $x\in B_i\backslash S$ for some $i\geq 3d$ then $x$ has precisely $r-d$ neighbors in $B_{i-1}\cap S$.
    \item If $x\in B_j\backslash (S\cup H)$ then $x$ has precisely $r-d$ neighbors in $S\backslash B_{j-1}$.
    \item $|S|\leq 2\big(1-\frac{d}{r}\big)n^d+\Theta(n^{d-1})$.
    %\item If $x\in S$ and $x'$ results by changing only the last coordinate $x_d$ of $x$ then $x'\in S$ also.
\end{enumerate}
A depiction of the set $H\cup S$ for the case $d=2$ and $r=3$ is given in Figure~\ref{fig:figure} left, where black and white respectively represent active and inactive. We claim the configuration $\omega$ in which all vertices in $H\cup S$ are active is a monotone dynamo of our desired size. Firstly, $|\omega|\le |S|+|H|\le2(1-\frac{d}{r})n^d+\Theta(n^{d-1})$. Hence, it remains to prove $\omega$ is a monotone dynamo. Since all initially active vertices, namely the sets $H$ and $S$, are stable then by Lemma~\ref{stable-lemma} every vertex which gets active during the process is stable. Thus, to show $\omega$ is a monotone dynamo, one must show that every vertex gets active eventually. For $t\geq 0$ let $U_t:=S\cup H\cup\Big(\bigcup_{j=d}^{(3d-1)+t} B_j\Big)$. Note that $B_j\subset H$ for all $j\le 3d-1$. Thus, $U_0=S\cup H$ which is fully active in $\omega$. Given $U_0$ is active at time $t=0$, we prove by induction that the vertices in $U_t$ are stable at time $t$ for all $t>0$. Let $x\in B_{j}\subset U_{t}$ with $j=(3d-1)+t$ for some $t> 0$. If $x$ is active then there is nothing to prove, so suppose otherwise. Then $x\in B_{j}\backslash (H\cup S)$ because $H$ and $S$ are stable. $x$ has precisely $d$ neighbors in $B_{j-1}$ because decreasing a coordinate of $x$ by one gives an element in $B_{j-1}$. Moreover, $x$ has precisely $r-d$ neighbors in $S\backslash B_{j-1}$. By the induction hypothesis, all vertices in $U_{t-1}$ are active, so in particular, all vertices in $B_{j-1}$ are active. So $x$ has $d+(r-d)=r$ active neighbors and becomes active at time $t$. This shows that $U_{t}$ is stable at time $t$. Since $U_{dn}$ is equal to the whole vertex set, $\omega$ is a dynamo.
%\vspace{-0.45cm}
\qed
\end{proof}

\begin{figure}[!ht]
\centering
\includegraphics[scale=0.4]{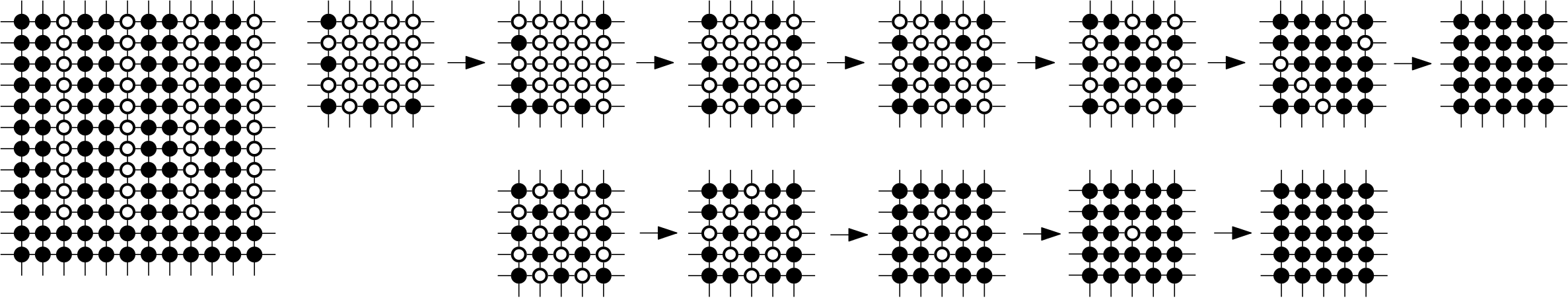}
\caption{Left: a monotone dynamo in $\mathbb{T}_n^2$ and reversible $3-$BP. Right: the top row illustrates the process induced by the construction from Theorem \ref{thm:odd_n_torus} for $d=r=2$, and the bottom row illustrates how the set $A_{0}$ from Lemma~\ref{lem:blinking_odd_n_dynamo} activates the whole torus for odd $n$ under reversible $2-$BP.}
\label{fig:figure}
\end{figure}
%\begin{figure}[!ht] \begin{center} \input{figures/threshold_construction.tex}\end{center}\caption{Dynamos in $\mathbb{T}_n^2$: non-white squares correspond to active vertices. The dark gray vertices give a dynamo in $3-$BP. All non-white vertices give a (monotone) dynamo in reversible $3-$BP.}\label{fig:threshold_construction}\end{figure}
%\caption{A monotone dynamo in $\mathbb{T}_n^2$ (for $n=12$) and reversible 3-BP. Gray squares correspond to active vertices. The dark border on the bottom and the left is the set $H$. The light gray stripes are the set $S\backslash H$; together with $H$ they give a monotone dynamo of size $\frac{2}{3}n^2+\mathcal{O}(n)$ for reversible $3-$BP. If we only take the alternating dark squares in $S$ (in combination with $H$) then we obtain a dynamo for the $3-$BP and also for the majority rule of size $\frac{1}{3}n^2+\mathcal{O}(n)$, which are the bounds from Theorem 25/26 in \cite{balbol09} and Lemma \ref{lem:majority_dynamo_lower_bound}. Under the $3-$BP rule, all light gray vertices are active after one iteration. In the majority process, the sets $B_i$ become stable one by one; first comes the vertex $(3,3)$, then follow $(4,3)$ and $(3,4)$ etc.}
 In this section, we applied the fact that any lower bound on $\overleftarrow{m}(\mathbb{T}_n^d,r)$ is also a lower bound for $\overleftarrow{m}_{mon}(\mathbb{T}_n^d,r)$. However, actually we can directly prove $\overleftarrow{m}_{mon}(G,r)\geq 2(1-\frac{\Delta}{2r})|G|$ for any $\Delta-$regular graph $G$. The proof uses ideas similar to the one of Theorem 25 in~\cite{balbol09} and is given in Lemma~\ref{lem:monotone_dynamo_threshold} in the appendix, Section~\ref{regular graphs}. 
%Let us mention that instead of lower bounding $\overleftarrow{m}_{mon}(\mathbb{T}_n^d,r)$ by $\overleftarrow{m}(\mathbb{T}_n^d,r)$ via Theorem \ref{thm:irrev_vs_rev_threshold_dynamo_on_torus}, we can also show $\overleftarrow{m}_{mon}(\mathbb{T}_n^d,r)\geq 2(1-\frac{d}{r})n^d$ directly. This lower bound is even true for any $2d-$regular graph. The proof uses ideas similar to the one of Theorem 25 in \cite{balbol09} and is given in Lemma \ref{lem:monotone_dynamo_threshold} in the appendix.
\subsection{(Monotone) dynamos for small r}\label{sec:small_r}
In this section we study (monotone) dynamos in $\mathbb{T}_n^d$ for $1\le r\leq d$. We first give the minimum size of $r-$BP dynamos and monotone reversible $r-$BP dynamos. We present our constructive upper bounds in Theorem~\ref{thm:good_small_r_dynamo}, and relying on prior results by Morrison and Noel~\cite{morrison2018extremal} we provide matching lower bounds. Then, building on these results we study the minimum size of reversible $r-$BP dynamos.
\begin{theorem}\label{thm:good_small_r_dynamo}
For all $d\geq 1$ and $1\leq r\leq d$ it holds

(i) $\overleftarrow{m}_{mon}(\mathbb{T}_n^d,r)\le \frac{2}{r}\binom{d}{r-1}n^{r-1}+\Theta(n^{r-2})$

(ii) $m(\mathbb{T}_n^d,r)\leq \frac{1}{r}\binom{d}{r-1}n^{r-1}+\Theta(n^{r-2}).$
\end{theorem}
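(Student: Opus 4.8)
The statement asks for two upper bounds on dynamo sizes in the torus for small $r$ ($1 \le r \le d$). Both claims involve activating the whole torus starting from a set concentrated on the low-dimensional structure $D(r) = \bigcup_{k \in K(r)} T(k)$ defined in Equation~\eqref{eq:smalldynamoconstruction}, which is a union of $\binom{d}{r-1}$ many $(r-1)$-dimensional sub-tori. Note that the claimed sizes exactly match $\tfrac{1}{r}|D(r)|$ and $\tfrac{2}{r}|D(r)|$ up to lower-order terms, since $|D(r)| = \binom{d}{r-1}n^{r-1} \pm \Theta(n^{r-2})$ (the sub-tori overlap only on lower-dimensional sets). This strongly suggests the strategy: first show that if all of $D(r)$ is active, then the whole torus percolates under (reversible) $r$-BP; then realize $D(r)$ itself as being activated by a much smaller seed, using the large-$r$ results inside each $(r-1)$-dimensional sub-torus.

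First I would prove the \emph{reduction lemma}: if every vertex of $D(r)$ is active, then the whole of $\mathbb{T}_n^d$ becomes active under $r$-BP (and the activation is monotone, hence valid for reversible $r$-BP too). The key combinatorial point is that a generic vertex $x \in \mathbb{T}_n^d$ has exactly $r-1$ coordinates equal to $1$ needed to lie in some $T(k)$; more usefully, one can order the vertices by how many coordinates equal $1$ (or by distance from the ``corner''), and argue inductively that a vertex with at least $r-1$ of its coordinates already ``reached'' by active neighbors gets its $r$-th activation from the remaining direction. The cleanest way is to show $D(r)$ stabilizes an increasing family of sub-tori: activating one $(r-1)$-dimensional sub-torus together with $D(r)$ lets one sweep out an $r$-dimensional slab, then the full torus. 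I expect this sweeping argument — making the induction hypothesis precise so that each newly-considered vertex genuinely sees $r$ active neighbors (accounting for the torus wrap-around and the overlaps between the $T(k)$) — to be the main obstacle, and I would organize it as a separate lemma before the theorem.

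Given the reduction, part (ii) follows by activating $D(r)$ cheaply in $r$-BP. Each sub-torus $T(k)$ is an $(r-1)$-dimensional torus, and within it the $r$-BP process restricted to $T(k)$ behaves like $(r-1)$-BP would on a lower torus; more directly, I would seed each $T(k)$ with a set of size $\tfrac{1}{r}n^{r-1} \pm \Theta(n^{r-2})$ that percolates it, using the large-$r$ construction of Theorem~\ref{thm:threshold_dynamos_large_r} applied in dimension $r-1$ with threshold $r$ (here $r = 2(r-1)$ is not generally the right count, so more care is needed — I would instead use that in $T(k) \cong \mathbb{T}_n^{r-1}$ a vertex needs $r$ active neighbors but has only $2(r-1)$ neighbors inside $T(k)$, so the relevant threshold is exactly large for the sub-torus). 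Summing $\tfrac{1}{r}n^{r-1}$ over the $\binom{d}{r-1}$ sub-tori, and absorbing overlaps into $\Theta(n^{r-2})$, gives the bound in (ii).

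For part (i), the monotone reversible bound, the factor $2$ reflects the bipartite obstruction from Theorem~\ref{thm:bipartite_reversible_vs_irreversible}: to keep the seed \emph{stable} (so that the dynamo is monotone under the reversible rule, via Lemma~\ref{stable-lemma}), each active vertex must retain $r$ active neighbors, which roughly doubles the required density, matching $\tfrac{2}{r}\binom{d}{r-1}n^{r-1}$. Concretely I would thicken the seed inside each $T(k)$ to a stable percolating set of size $\tfrac{2}{r}n^{r-1} \pm \Theta(n^{r-2})$ — again borrowing the large-$r$ monotone construction from Theorem~\ref{thm:threshold_dynamos_large_r} applied within the $(r-1)$-dimensional sub-torus — so that once $D(r)$ is active it stays active, and then invoke the same reduction lemma; since every activated vertex acquires $r$ active (stable) neighbors, Lemma~\ref{stable-lemma} guarantees monotonicity. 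The main thing to verify carefully in both parts is that the seeds chosen in different sub-tori, which share lower-dimensional intersections, combine without double-counting beyond $\Theta(n^{r-2})$ and without breaking stability on the overlaps.
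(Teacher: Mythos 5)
Your plan is correct and follows essentially the same route as the paper: a reduction lemma showing that a fully active $D(r)$ percolates the torus (the paper's Lemma~\ref{lem:small_r_dynamo_upper_bound}, proved by exactly the coordinate-sum sweeping induction you sketch), combined with seeding each $(r-1)$-dimensional sub-torus $T(k)$ via the large-$r$ constructions, since threshold $r$ is in the large regime for $\mathbb{T}_n^{r-1}$. The only detail you omit is the degenerate case $r=1$, which the paper dispatches separately with one (resp.\ two adjacent) active vertices.
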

To prove Theorem~\ref{thm:good_small_r_dynamo}, we first show a property of set $D(r)$ from (\ref{eq:smalldynamoconstruction}). %and then modify this construction to get the desired bounds.
\begin{lemma}\label{lem:small_r_dynamo_upper_bound}
Let $2\leq r\leq d$ and $\omega$ be a configuration on $\mathbb{T}_n^d$ where the active vertices are precisely $D(r)$. Then, $\omega$ is a monotone dynamo in reversible $r-$BP, and consequently $r-$BP.
\end{lemma}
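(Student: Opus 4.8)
The plan is to reduce the claim to two independent facts: that the initially active set $D(r)$ is stable, and that every vertex of $\mathbb{T}_n^d$ eventually becomes active. Granting these, Lemma~\ref{stable-lemma} guarantees that every newly activated vertex is itself stable, so no vertex ever switches off; hence the trajectory is monotone and, since it reaches the all-active configuration, $\omega$ is a monotone dynamo. The ``consequently $r$-BP'' clause then follows immediately, because our activation argument only ever turns a vertex on once it has accumulated $r$ active neighbors and never turns one off, so the identical activation order is valid in $r$-BP.

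For stability I would exploit the sub-torus structure of $D(r)=\bigcup_{k\in K(r)}T(k)$. Each $T(k)$ is an $(r-1)$-dimensional sub-torus, so inside $T(k)$ every vertex has $2(r-1)$ neighbours, all lying in $T(k)\subseteq D(r)$. Since $r\geq 2$ gives $2(r-1)\geq r$, every vertex of $D(r)$ has at least $r$ active neighbours within $D(r)$, which is exactly the stability criterion for active sets recorded just after the definition of \emph{stable}. This is also where the hypothesis $r\geq 2$ is essential: for $r=1$ the single vertex $D(1)$ has no active neighbour and would switch off immediately.

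The heart of the argument is the spreading step, for which the key reformulation is that $D(r)$ is precisely the set of vertices having at least $d-r+1$ coordinates equal to $1$. Consequently, any $x\notin D(r)$ has at most $d-r$ coordinates equal to $1$, hence at least $r$ coordinates with value $\geq 2$. Decreasing any such coordinate by one produces a genuine neighbour of $x$ whose coordinate sum $|x|$ drops by exactly $1$ (no wrap-around occurs precisely because the coordinate exceeds $1$). I would therefore induct on the level sets $B_j$, proving that after $t$ rounds every vertex with $|x|\leq d+t$ is active, the base case being the unique minimal vertex $(1,\dots,1)\in D(r)$. In the inductive step, a vertex $x$ with $|x|=d+t+1$ that is not already in $D(r)$ has at least $r$ distinct decrement-neighbours, each of coordinate sum $d+t$ and hence active by the induction hypothesis; thus $x$ turns on in round $t+1$. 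As $|x|\leq dn$ for all vertices, the whole torus is active after at most $dn-d$ rounds.

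I expect the remaining points to be bookkeeping rather than conceptual: one must confirm that the $r$ decrement-neighbours are distinct (they are, as they lower distinct coordinates) and that they lie in strictly earlier levels, so the induction has no circular dependence (it does not, since their coordinate sum is strictly smaller). The genuinely clarifying step — and the one I would highlight — is the characterization of $D(r)$ by the number of coordinates equal to $1$, which converts the geometric union-of-sub-tori description into the clean counting statement ``$x\notin D(r)\Rightarrow x$ has at least $r$ coordinates exceeding $1$'' that drives the level induction.
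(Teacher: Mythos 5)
Your proposal is correct and follows essentially the same route as the paper's proof: stability of $D(r)$ via the $2(r-1)\geq r$ neighbour count inside each sub-torus $T(k)$, monotonicity via Lemma~\ref{stable-lemma}, and activation by induction on the level sets $B_j$ using the observation that any $x\notin D(r)$ has at least $r$ coordinates exceeding $1$ and hence $r$ distinct active neighbours in $B_{j-1}$. The only cosmetic difference is that the paper starts its induction at level $d+r-1$ (noting $B_j\subset D(r)$ for $d\leq j\leq d+r-1$ by pigeonhole) whereas you start at level $d$; this changes nothing.
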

\begin{proof}
By the construction of $D(r)$, every vertex in $D(r)$ has at least $2(r-1)\geq r$ neighbors inside $D(r)$, so if we make all vertices in $D(r)$ active then it is stable under reversible $r-$BP.
%The monotonicity is immediate then by Lemma~\ref{lem:momotone_dynamo_stable}. 
%This implies that every vertex that is active under $\omega$ stays active under reversible $r-$BP. Moreover, every vertex that becomes active in reversible $r-$BP process induced by $\omega$ will stay active, which implies monotonicity of $\omega$. 
%It remains to show that $\omega$ is a dynamo. As before, let $B_i:=\{x\in\mathbb{T}_n^d:|x|=i\}$ for $d\leq i\leq dn$.
We now show $\omega$ is a monotone dynamo. Since all active vertices in $\omega$ are stable, by Lemma~\ref{stable-lemma} every vertex which gets active during the process is stable; thus, we must show each vertex eventually gets active. First of all, it holds $B_j\subset D(r)$ for all $d\leq j\leq d+(r-1)$ because for $j$ in this range every vertex in $B_j$ has at least $d-r+1$ coordinates with value one by the pigeonhole principle.
For $t\geq 0$ let us set $D_t:=D(r)\cup\Big(\bigcup_{j=d}^{t+(d+r-1)}B_j\Big).$
Then, we have $D_0=D(r)$, and our goal is to show that after $t\ge 0$ rounds of reversible $r-$BP the set $D_t$ is stable. For $D_0$ this holds by assumption, so we now perform induction. Suppose that $D_{t-1}$ is active for some $t-1\geq 0$. Set $j:=t+(d+r-1)$ and let $x\in B_{j}\subset D_{t}$. If $x\in D(r)$ then $x$ is already active, so suppose otherwise. When $x\not\in D(r)$, the vertex $x$ can have at most $d-r$ coordinates that are one. In particular, at least $r$ coordinates are greater than one. Decreasing any of them by one now gives a vertex in $B_{j-1}$, and every vertex in $B_{j-1}$ is stable by the induction hypothesis because $B_{j-1}\subset D_{t-1}$ by definition. So $x$ has $r$ active neighbors and therefore becomes active under reversible $r-$BP rule. 
\qed
\end{proof}

\begin{proof}\emph{(Theorem \ref{thm:good_small_r_dynamo})}
%The case $d=r=2$ is known from \cite{flocchini04}, so let $d\geq 3$.
%Fix $r\in\{2,\dots,d\}$. Our goal is to construct a small monotone dynamo $\omega$ for $\mathbb{T}_n^d$ under $r-$BP rule. 
The statement is true for $r=1$ because for any connected graph a configuration with a single active vertex (analogously with two adjacent active vertices) is a monotone dynamo under $r-$BP (resp. reversible $r-$BP).

(i) Recall the notation from (\ref{eq:k_r}) and (\ref{eq:smalldynamoconstruction}). By Lemma \ref{lem:small_r_dynamo_upper_bound}, it suffices to construct a configuration $\omega$ that makes all vertices in $D(r)$ active monotonically. Let $\omega_r\in\{0,1\}^{n^{r-1}}$ be a smallest monotone dynamo in $\mathbb{T}_n^{r-1}$ in reversible $r-$BP for $r\geq 2$. From Theorem \ref{thm:threshold_dynamos_large_r} we know that $|\omega_r|\leq \frac{2}{r}n^{r-1}+\Theta(n^{r-2})$.
%, and it is easy to see that the same bound holds in $\mathbb{T}_n^1$ when $r=2$ (it also follows from Lemma \ref{lem:monotone_dynamo_threshold} in the appendix). (This lemma gives a lower bound, but we need an upper bound. However, actually the theorem also covers the case r=2 by itself). 
We now turn to the construction of a monotone dynamo for reversible $r-$BP in $\mathbb{T}_n^d$. For $k\in K(r)$, let $\omega_r^{(k)}$ be a smallest monotone dynamo under reversible $r-$BP in the $(r-1)$-dimensional sub-torus induced by $T(k)$.
%To every $(d-r+1)-$tuple $k=(k_1,\dots,k_{d-r+1})\in K(r)$ we define an $(r-1)-$dimensional torus of $\mathbb{T}_n^d$ as follows. Let $D(k,r)$ be the set of vertices that are active when we use $\omega_r$ to induce a monotone $r-$BP dynamo in the subgraph induced by $T(k)$. 
We now define $\omega\in\{0,1\}^{n^d}$ as follows. 
%Let $v_1,\dots,v_{n^d}\in [n]^d$ be the vertices of $\mathbb{T}_n^d$, which are identified by their respective coordinates. 
Each vertex is active in $\omega$ if it is active in $\omega_r^{(k)}$ for some $k\in K(r)$ and inactive otherwise.
%if $v_i\in \bigcup_{k\in K(r)} D(k,r)$ and $\omega_i:=0$ otherwise.
Since $|K(r)|=\binom{d}{r-1}$ and $|\omega_r^{(k)}|\leq \frac{2}{r}n^{r-1}+\Theta(n^{r-2})$ for any $k\in K(r)$, we obtain $|\omega|\leq |K(r)|\cdot|\omega_r^{(k)}|\leq \frac{2}{r}\binom{d}{r-1}n^{r-1}+\Theta(n^{r-2})$. Moreover, since $\omega_r^{(k)}$ is a monotone dynamo under reversible $r-$BP in the sub-torus induced by $T(k)$ for any $k\in K(r)$, all the vertices in $D(r)=\bigcup_{k\in K(r)} T(k)$ become active in reversible $r-$BP process induced by $\omega$. 
%But $\bigcup_{k\in K(r)} T(k)=D(r)$, and so $\omega$ is an initial configuration that results in $D(r)$ becoming active under reversible $r-$BP rule (and in a monotone way, by monotonicity of $\omega_r$).

(ii) Redefine $\omega_r^{(k)}$ to be a smallest $r-$BP dynamo on the sub-torus induced by $T(k)$, which is of size at most $\frac{1}{r}n^{r-1}+\Theta(n^{r-2})$ (see Table~\ref{Table 1}). The proof then goes analogously.
\qed
\end{proof}

%Application of Theorem \ref{thm:irrev_vs_rev_threshold_dynamo_on_torus} to our construction directly gives an upper bound for irreversible $r-$BP dynamos.
%\begin{corollary}\label{cor:small_r_irrev_dynamo_upper_bound}
%For all $d\geq 2$ and $2\leq r\leq d$ it holds $m(\mathbb{T}_n^d,r)\leq \frac{1}{r}\binom{d}{r-1}n^{r-1}+\mathcal{O}(n^{r-2}).$
%\end{corollary}

The tightness of the above upper bound for $r-$BP follows from Theorem~\ref{thm:hatami}, which is a result from~\cite{morrison2018extremal}. Furthermore, we provide Lemma~\ref{lem:dynamo_mon_vs_irreversible} which combined with Theorem~\ref{thm:hatami} prove the tightness of our construction for monotone dynamos in reversible $r-$BP.
\begin{theorem}[\cite{morrison2018extremal}]\label{thm:hatami}
For all $d\ge 1$ and $1\leq r\leq d$ it holds that $m(\mathbb{T}_n^d,r)\geq \frac{1}{r}\binom{d}{r-1}n^{r-1}$.
\end{theorem}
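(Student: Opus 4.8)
The plan is to prove the lower bound by the \emph{algebraic invariant} (or linear-algebra) method, reducing the extremal question to a rank computation that is exactly what the weak saturation number of $\mathbb{T}_n^d$ encodes. Concretely, I would fix a field $\mathbb{F}$ and assign to every vertex $x\in\mathbb{T}_n^d$ a vector $w_x$ in some $\mathbb{F}$-vector space $W$ with the property that for \emph{every} vertex $x$ and \emph{every} $r$-element subset $S\subseteq N(x)$ one has $w_x\in\operatorname{span}\{w_y:y\in S\}$. Given such a system, I track the subspace $\operatorname{span}\{w_x:x\text{ active}\}$ along the $r$-BP process: when a vertex $v$ turns active it has at least $r$ active neighbours, which contain some $r$-subset $S$, and by the defining property $w_v$ already lies in the span of the currently active vectors, so this subspace never grows. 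Since it equals $\operatorname{span}\{w_x:x\in A\}$ at time $0$ and $\operatorname{span}\{w_x:x\in\mathbb{T}_n^d\}$ once $A$ has percolated, any dynamo $A$ satisfies $|A|\ge\dim\operatorname{span}\{w_x:x\in A\}=\dim\operatorname{span}\{w_x:x\in\mathbb{T}_n^d\}$. Thus $m(\mathbb{T}_n^d,r)$ is bounded below by the largest achievable span dimension, and the whole problem becomes: construct a vector system of maximal rank satisfying the ``$w_x$ lies in the span of every $r$-subset of its neighbours'' condition.

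For the construction I would exploit $\mathbb{T}_n^d=C_n^{\square d}$ and the fact that the $2d$ neighbours of $x$ come in $d$ antipodal pairs $x\pm e_i$. The natural attempt is to build $w_x$ as a tensor assembled from one-dimensional data in each coordinate, over a field containing a primitive $n$-th root of unity, so that the ``span of every $r$-subset'' condition reduces to uniform local linear dependences among the neighbour vectors along the $d$ directions. This is precisely the set-up behind the linear-algebraic lower bounds for weak saturation: a vector system with this property is the certificate witnessing a weak-saturation bound on $\mathbb{T}_n^d$, so rather than re-deriving it from scratch I would invoke the weak saturation results for the torus from~\cite{morrison2018extremal} to supply both the system and the value of its rank.

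Extracting the \emph{exact} constant $\tfrac1r\binom{d}{r-1}n^{r-1}$ is the heart of the matter and the main obstacle. The factor $\binom{d}{r-1}$ should come from choosing the $r-1$ ``growing'' directions among the $d$ coordinates and $n^{r-1}$ from the free positions along them, so a crude count recovers the order $n^{r-1}$; the delicate point is the factor $\tfrac1r$, which is not an integer multiple in general and therefore cannot be a single dimension count. I expect it to reflect the step-counting nature of percolation, where each activation consumes $r$ active neighbours, combined with a symmetric family of essentially equivalent certificates whose contributions overlap exactly $r$-fold, so that the honest rank is $\tfrac1r$ of the crude count; equivalently it is the overcounting already built into the weak-saturation rank computation. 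Making this charging precise, and verifying that the vector system achieves the full rank with no lower-order loss (the bound is stated with no additive slack), is where the real work lies and is exactly the content imported from~\cite{morrison2018extremal}.
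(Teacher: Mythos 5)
The paper does not prove this statement at all: Theorem~\ref{thm:hatami} is quoted verbatim from~\cite{morrison2018extremal}, and the only ``proof'' in the paper is the citation. So the relevant comparison is between your sketch and the argument in that reference. Your general framework is the right one and is indeed what underlies the cited result: assign vectors $w_x$ so that $w_x$ lies in the span of $\{w_y: y\in S\}$ for every $r$-subset $S\subseteq N(x)$, observe that the span of the active vectors is invariant under $r$-BP activations, and conclude that any percolating set has size at least the rank of the full system. That reduction is correct and standard (it is the weak-saturation/linear-algebra method), and your observation that the bound $\tfrac1r\binom{d}{r-1}n^{r-1}$ need not be an integer, hence cannot literally be a single dimension count, is a genuinely relevant caveat.

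However, as a proof your proposal has a real gap: everything that actually produces the number $\tfrac1r\binom{d}{r-1}n^{r-1}$ is missing. You never construct the vector system, never compute its rank, and your explanation of the $\tfrac1r$ factor (``contributions overlap exactly $r$-fold'') is a heuristic, not an argument; indeed you explicitly concede that both the system and its rank are ``imported from~\cite{morrison2018extremal}.'' That makes your write-up a description of the method of the cited paper rather than a proof of the theorem. In the context of this paper that is arguably acceptable --- the authors themselves treat the statement as a black box --- but you should be clear that what you have written establishes only the soft half (span invariance implies $|A|\ge\operatorname{rank}$) and that the hard half, namely exhibiting a certificate of rank at least $\tfrac1r\binom{d}{r-1}n^{r-1}$ (or the reduction to weak saturation numbers of grids/tori that Morrison and Noel actually carry out), is entirely deferred. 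If the intent was a self-contained proof, it is not one.
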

\begin{lemma}\label{lem:dynamo_mon_vs_irreversible}
For $d\geq 1$ and $1\leq r\leq 2d$ let $\omega$ be a monotone dynamo in $\mathbb{T}_n^d$ and reversible $r-$BP. Then there exists an $r-$BP dynamo $\omega'$ of size $|\omega'|\leq |\omega|/2+\Theta(|\omega|/n)$.
\end{lemma}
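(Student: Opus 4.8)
The plan is to treat the cases of even and odd $n$ separately, in both cases reducing to the bipartite comparison of Theorem~\ref{thm:bipartite_reversible_vs_irreversible}. A useful preliminary observation, which I would record first, is that since $\omega$ is a \emph{monotone} dynamo in reversible $r$-BP, the reversible process started at $\omega$ never deactivates a vertex; hence at every step each active vertex already has at least $r$ active neighbors and each inactive vertex turns active exactly when it first reaches $r$ active neighbors. Consequently the reversible process from $\omega$ coincides with the $r$-BP process from $\omega$, and in particular $\omega$ is itself an $r$-BP dynamo. The entire task is to halve its size.

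For even $n$, the torus $\mathbb{T}_n^d$ is bipartite with classes $A_0,A_1$ and has minimum degree $2d\ge r$. A monotone reversible dynamo is in particular a reversible dynamo, so the construction in the proof of Theorem~\ref{thm:bipartite_reversible_vs_irreversible} applies verbatim: retaining the states of $\omega$ on the smaller of the two classes and deactivating the other yields an $r$-BP dynamo $\omega'$ with $|\omega'|\le|\omega|/2$, which already satisfies the claim with a vacuous error term. Thus only the odd case requires genuine work.

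For odd $n$ the obstruction is that $\mathbb{T}_n^d$ is no longer bipartite: with respect to the parity classes $A_0,A_1$ the monochromatic edges are precisely the $dn^{d-1}$ ``wrap-around'' edges joining $x_j=n$ to $x_j=1$ (these change $|x|$ by the even number $n-1$, hence preserve parity). My plan is to cut these edges so as to recover the bipartite grid $[n]^d$, while paying only $\Theta(|\omega|/n)$. By a cyclic relabelling of each coordinate (a graph automorphism of the torus carrying $\omega$ to another monotone reversible dynamo) I may place the wrap of direction $j$ between any chosen pair of consecutive layers; averaging over the choice and using pigeonhole, I pick a cut-layer $\{x_j=c_j\}$ carrying at most $|\omega|/n$ active vertices of $\omega$, for each $j$ independently. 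Writing $H:=\bigcup_{j=1}^d\{x:x_j=c_j\}$, the active vertices of $\omega$ lying on $H$ number at most $d|\omega|/n=\Theta(|\omega|/n)$. I would then define $\omega'$ by keeping the states of $\omega$ on the smaller parity class together with these few active vertices of $\omega$ on $H$, and deactivating everything else, so that $|\omega'|\le|\omega|/2+\Theta(|\omega|/n)$ by construction. To see that $\omega'$ is an $r$-BP dynamo, I would mimic the even-$n$ decoupling: on the bipartite grid obtained by deleting the wrap edges along $H$, the reversible process splits into two sub-processes supported on $A_0$ and $A_1$ at alternating times, so the retained class percolates through its own half and the complementary class follows one step later; the retained cut-layer vertices, which stay active forever under $r$-BP, are intended to act as a stable seam neutralising the deleted wrap edges and validating the grid analysis on the torus.

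The main obstacle is exactly this last verification. Unlike a fully active stable wall along $H$ (which would cost $\Theta(n^{d-1})$ and is unaffordable when $r$ is small), only the $\omega$-active vertices of the cut-layers are kept, so one must argue that these already suffice. The natural route is to exploit that $\omega$ is a monotone dynamo on the \emph{full} torus, whose process equals the $r$-BP process and therefore fills all of $H$, together with the permanence of $r$-BP activations, to bootstrap the seam to full activation and then propagate through the grid without ever relying on a deleted wrap edge. The delicate point is the behaviour of the grid boundary layers, where vertices have degree below $2d$ and hence may fail to reach threshold for the larger values of $r$; there the looser error budget $\Theta(|\omega|/n)=\Theta(n^{d-1})$ for large $r$ makes it affordable, if needed, to keep the entire cut-layers $H$ as a stable active seam, so that the two regimes of $r$ can be reconciled within the single stated error term.
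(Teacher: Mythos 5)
Your construction is essentially the paper's: after cyclically shifting each coordinate so that the chosen cut-layers carry at most $\Theta(|\omega|/n)$ active vertices, keep the active vertices of the smaller parity class together with the active vertices on those layers. Writing $W$ for the active set of $\omega$ and $C$ for the union of the $d$ shifted border layers (your $H$), the candidate is $\omega'=W\cap(A_0\cup C)$, exactly as in the paper, and your even-$n$ half is fine (there the error term is vacuous). But in the odd-$n$ case, which is the substantive one, you stop exactly where the proof has to be done: you say the retained seam vertices are ``intended to act as a stable seam'' and that ``one must argue that these already suffice,'' and you do not supply that argument. The route you sketch --- a temporal decoupling into alternating sub-processes on $A_0$ and $A_1$, as in the proof of Theorem~\ref{thm:bipartite_reversible_vs_irreversible} --- is the wrong tool here: it needs genuine bipartiteness over many rounds, and keeping only $W\cap C$ on the seam does not control the parity-violating wrap edges. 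Your fallback of activating all of $C$ costs $\Theta(n^{d-1})$, which is \emph{not} $O(|\omega|/n)$ for small $r$, where $|\omega|=\Theta(n^{r-1})$ with $r\le d$; so the two regimes cannot be reconciled within the stated error term this way.

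The missing idea is to use monotonicity \emph{spatially at time zero} rather than temporally. Since $\omega$ is a monotone reversible dynamo, every vertex of $W$ already has at least $r$ neighbours in $W$ in the initial configuration (otherwise it would be deactivated in the first round). Moreover, every vertex of $A_1\setminus C$ has \emph{all} of its $2d$ neighbours in $A_0\cup C$: a non-wrapping step changes the parity of $|x|$, and a wrapping step lands in $C$. Hence each vertex of $W\cap(A_1\setminus C)$ has at least $r$ active neighbours already in $\omega'$, so a single round of $r$-BP started from $\omega'$ recovers all of $W$; since $W$ is itself an $r$-BP dynamo (the reversible process from $\omega$ never deactivates anything, as you correctly note), $\omega'$ is one too. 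This one-step argument requires no case distinction on the parity of $n$ or on the size of $r$ and closes the gap in your odd-$n$ verification.
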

\begin{proof}
Let $W$ be the set of active vertices in $\omega$ and define $C:=\{x\in\mathbb{T}_n^d:\exists j\text{ s.t. }x_j=n\}$ to be the ``border vertices'' of $\mathbb{T}_n^d$, which are less than $dn^{d-1}$. So by shifting coordinates we may assume that $|C\cap W|\le \Theta(|\omega|/n)$. Moreover, there is an $i\in\{0,1\}$ such that $|W\cap A_{i}|\leq|\omega|/2$, without loss of generality $i=0$. Now, let $\omega'$ be the configuration in which all vertices in $W\cap (A_{0}\cup C)$ are active. Clearly, $|\omega'|\leq |\omega|/2+\Theta(|\omega|/n)$. It remains to show that $\omega'$ is a dynamo under $r-$BP. It suffices to show that after one round of $r-$BP by starting from configuration $\omega'$, all active vertices in $\omega$ (i.e., all vertices in $W$) will be active because $\omega$ is a dynamo also in $r-$BP. All vertices in $W\cap (A_{0}\cup C)$ stay active, and all vertices in $W\cap (A_{1}\setminus C)$ become active since they have at least $r$ active neighbors in $\omega'$. This is true because firstly by the monotonicity of $\omega$, each active vertex in $\omega$ is stable, meaning it has at least $r$ active neighbors. Furthermore, each vertex in $A_{1}\setminus C$ has all its neighbors in $A_{0}\cup C$. Thus, each vertex in $W\cap (A_{1}\setminus C)$ has $r$ active neighbors in $W\cap (A_{1}\setminus C)$, i.e. $\omega'$. 
\qed
\end{proof}

So far we provided tight bounds on the minimum size of monotone dynamos in $r-$BP and reversible $r-$BP. In the rest of this section, we focus on dynamos in reversible $r-$BP. We show that $\overleftarrow{m}(\mathbb{T}_n^d,r)=\frac{2}{r}\binom{d}{r-1}n^{r-1}\pm \Theta(n^{r-2})$ for even $n$ and $\overleftarrow{m}(\mathbb{T}_n^d,r)=\frac{1}{r}\binom{d}{r-1}n^{r-1}\pm\Theta(n^{r-2})$ for odd $n$. For even $n$, by Theorem~\ref{thm:good_small_r_dynamo} (i) we have a monotone dynamo, which is also a dynamo, of desired size. Since $\mathbb{T}_n^d$ is a bipartite graph for even $n$, Theorem~\ref{thm:bipartite_reversible_vs_irreversible} and Theorem~\ref{thm:hatami} together provide a matching lower bound. For odd $n$, the lower bound is immediate from Theorem~\ref{thm:hatami} because $\overleftarrow{m}(G,r)\geq m(G,r)$. Thus, it only remains to prove there is a dynamo of size $\frac{1}{r}\binom{d}{r-1}n^{r-1}+\Theta(n^{r-2})$ for odd $n$, which we do in Theorem~\ref{thm:odd_n_torus}. In order to prove this theorem let us first state Lemma~\ref{lem:blinking_odd_n_dynamo} and Lemma \ref{lem:from_w_r_to_a0mod2}.
%This is only a statement about the relation between different dynamo sizes, but $m(\mathbb{T}_n^d,r)$ and $\overleftarrow{m}_{mon}(\mathbb{T}_n^d,r)$ are known from \cite{balbol09} and Theorem \ref{thm:good_small_r_dynamo}, respectively. Note that if $n$ is even then $\mathbb{T}_n^d$ is bipartite and so the statement immediately follows from Theorem \ref{thm:hatami} and Theorem \ref{thm:bipartite_reversible_vs_irreversible}.
\begin{lemma}\label{lem:blinking_odd_n_dynamo}
For $\mathbb{T}_n^d$ with odd $n$ and $1\leq r \leq d$, let $\omega$ be a configuration in which $A_0$ is fully active. Then $\omega$ is a dynamo under reversible $r-$BP and for any $1\leq r\leq d$.
%Let $\omega\in\Omega$ be a configuration on $\mathbb{T}_n^d$ such that for the set $W$ of active vertices it induces it holds $W\supset A_{0}$. If $n$ is odd then $\omega$ is a reversible $r-$BP dynamo for every $1\leq r\leq d$.
\end{lemma}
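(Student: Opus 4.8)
The plan is to exploit the \emph{monotonicity} of the reversible $r$-BP rule together with the non-bipartiteness of $\mathbb{T}_n^d$ for odd $n$. Write $\overleftarrow{f}$ for $\overleftarrow{f}(\cdot,r)$ and let $\{\omega^{(t)}\}$ be the process started from $\omega^{(0)}=\mathbf{1}_{A_0}$. Since the rule depends only on the number of active neighbors, $\overleftarrow{f}$ is order-preserving: $\eta\ge\eta'$ implies $\overleftarrow{f}(\eta)\ge\overleftarrow{f}(\eta')$. The preliminary computation I would record is that one application of $\overleftarrow{f}$ carries $\mathbf{1}_{A_0}$ above $\mathbf{1}_{A_1}$ and vice versa. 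Writing $b_v$ for the number of coordinates of $v$ equal to $1$ or $n$, a vertex in $A_1$ has exactly $2d-b_v\ge 2d-d=d\ge r$ neighbors in $A_0$; hence from $\mathbf{1}_{A_0}$ every vertex of $A_1$ has at least $r$ active neighbors and turns on, so $\overleftarrow{f}(\mathbf{1}_{A_0})\ge\mathbf{1}_{A_1}$, and symmetrically $\overleftarrow{f}(\mathbf{1}_{A_1})\ge\mathbf{1}_{A_0}$.

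Combining these with monotonicity gives $\omega^{(2)}=\overleftarrow{f}(\omega^{(1)})\ge\overleftarrow{f}(\mathbf{1}_{A_1})\ge\mathbf{1}_{A_0}=\omega^{(0)}$. Applying the monotone map $\overleftarrow{f}^{2}$ repeatedly, the even-time subsequence is non-decreasing, $\omega^{(0)}\le\omega^{(2)}\le\omega^{(4)}\le\cdots$, and since $\Omega$ is finite it stabilizes at a configuration $\omega^{\ast}$ that is a fixed point of $\overleftarrow{f}^{2}$ with $\omega^{\ast}\ge\mathbf{1}_{A_0}$. It therefore suffices to prove $\omega^{\ast}=\mathbf{1}$, i.e.\ that $A_0$ reaches the all-active state along even times.

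To analyze $\omega^{\ast}$, set $\omega^{\ast\ast}:=\overleftarrow{f}(\omega^{\ast})$, so that $\{\omega^{\ast},\omega^{\ast\ast}\}$ is a $2$-cycle with $\omega^{\ast}\ge\mathbf{1}_{A_0}$ and $\omega^{\ast\ast}\ge\mathbf{1}_{A_1}$. Let $I,I'$ be the inactive sets of $\omega^{\ast},\omega^{\ast\ast}$; then $I\subseteq A_1$ and $I'\subseteq A_0$. Here the oddness of $n$ enters decisively: because $n$ is odd, an edge obtained by wrapping a coordinate around (from $n$ to $1$ or back) joins two vertices of the \emph{same} parity, while every other edge joins opposite parities. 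Consequently a vertex $v\in A_1$ has all its same-parity (seam) neighbors active in $\omega^{\ast\ast}\ge\mathbf{1}_{A_1}$, so its number of active neighbors there equals $2d-|N(v)\cap I'|$, and the fixed-point condition reads $v\in I\iff|N(v)\cap I'|\ge 2d-r+1$, and symmetrically $u\in I'\iff|N(u)\cap I|\ge 2d-r+1$.

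It remains to show that the only pair $(I,I')$ satisfying these two relations is $I=I'=\varnothing$, and this is the step I expect to be the main obstacle. A useful sanity check is that for \emph{even} $n$ the same derivation makes every edge parity-reversing, and then $(I,I')=(A_1,A_0)$ is a genuine solution---exactly the period-two ``blinking'' that stops $A_0$ from percolating and explains the parity dichotomy in Table~\ref{Table 1}; so any argument must use oddness. For odd $n$ I would rule out a nonempty defect as follows. Since $2d-r+1\ge d+1$, each vertex of $I$ has at least $d+1$ neighbors in $I'$ (all in $A_0$), forcing $b_v\le r-1$ for every $v\in I$ and likewise for $I'$; thus the defect avoids the ``deep corners'' where many coordinates sit on the seam. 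I would then propagate along a coordinate line $\{(t,c_2,\dots,c_d):t\in[n]\}\cong C_n$: by oddness the parity pattern along such a line alternates except for one doubled spot at the seam, and since $\omega^{\ast}\supseteq A_0$ meets that seam, two adjacent same-parity vertices are simultaneously active. Such a pair yields a set in which every vertex has at least $r$ active neighbors, which is stable by the remark preceding Lemma~\ref{stable-lemma} and grows to the whole torus, contradicting $\omega^{\ast}\ne\mathbf{1}$. Turning this seam-driven cascade into a clean monovariant, uniform over all $1\le r\le d$ and all $d$, is the part that needs the most care.
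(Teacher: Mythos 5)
Your monotonicity setup is correct and gives a clean reduction: $\overleftarrow{f}(\cdot,r)$ is order-preserving, $\overleftarrow{f}(\mathbf{1}_{A_0})\geq \mathbf{1}_{A_1}$ and $\overleftarrow{f}(\mathbf{1}_{A_1})\geq\mathbf{1}_{A_0}$ (the count $2d-b_v\geq d\geq r$ of opposite-parity neighbors is right for odd $n$), so the even-time subsequence increases to a fixed point $\omega^{\ast}$ of $\overleftarrow{f}^{2}$ with $\omega^{\ast}\geq\mathbf{1}_{A_0}$, and the fixed-point conditions $v\in I\iff|N(v)\cap I'|\geq 2d-r+1$ together with the consequence $b_v\leq r-1$ for every defect vertex are all correctly derived. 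However, the proof stops exactly where the real work begins: ruling out a nonempty defect pair $(I,I')$. The sketch offered for that step does not hold up. Two adjacent active vertices do not form a set in which every vertex has at least $r$ active neighbors once $r\geq 2$ (each member of the pair has only one neighbor inside the pair), so the promised stable set is not produced, and ``grows to the whole torus'' asserts the conclusion rather than proving it. You flag this yourself as the step needing the most care; by your own account the argument is incomplete, and this is a genuine gap, not a presentational one.

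What is missing is precisely the propagation argument that constitutes the paper's proof. The paper stratifies the torus by distance to the nearest corner, $C(l):=\{x:\min_{c\in\{1,n\}^d}|x-c|=l\}$, notes that for odd $n$ each $C(l)$ lies in a single parity class, that $C(0)=\{1,n\}^d$ is stable (each corner has $d$ neighbors among the corners via the seam edges), and that every vertex of $C(l)$ has $d$ neighbors in $C(l-1)\cup C(l)$; hence the layers $C(l)$ become stable one by one and the whole torus activates. Your framework can in fact be closed with the same device: your bound $b_v\leq r-1\leq d-1$ already shows that no corner lies in $I\cup I'$, and an induction on $l$ (a vertex of $C(l)$ has $d-b_v$ neighbors in $C(l-1)$ plus $b_v$ same-parity seam neighbors in $C(l)$, all active in the relevant one of $\omega^{\ast},\omega^{\ast\ast}$, giving at least $d\geq r$ active neighbors and hence excluding it from the defect) yields $C(l)\cap(I\cup I')=\emptyset$ for all $l$. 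But that induction is the substance of the lemma, and it is absent from your write-up; the coordinate-line cascade you describe instead would need to be replaced by it.
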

%The proof is given in the appendix. The idea is to show by induction that at time $l$ in reversible $r-$BP process induced by $\omega$, the set $C(l):=\{x\in\mathbb{T}_n^d:\min_{c\in\{1,n\}^d}|x-c|=l\}$ becomes stable. Since $\bigcup_{l=0}^{d\lfloor n/2\rfloor}=[n]^d$ this proves the claim.  Crucially, we observe that if $n$ is odd, all elements from $C(0)=\{1,n\}^d$ are of identical parity and this then follows for $C(l)$ for all $l$. For example, if we assume that $C(0)\subset A_{0}$ then $C(0)$ is stable. To see this, let $x=(x_1,\dots,x_d)\in C(0)$ be a vertex. Then for every $j=1,\dots,d$, if $x_j=1$ we may decrease $x_j$ by one (mod $n$) to get a neighboring vertex in $C(0)$, and if $x_j=n$ we may increase $x_j$ by one (mod $n$) to get a neighboring vertex in $C(0)$. This means that $C(0)$ is $d-$regular and therefore remains stable. For $C(l),l\geq 1$ the proof goes similarly by induction.
\begin{proof}
Let $C(l):=\{x\in\mathbb{T}_n^d:\min_{c\in\{1,n\}^d}|x-c|=l\}$ for $0\leq l\leq d\lfloor n/2\rfloor$. The set $C(0)$ corresponds to all ``corner'' vertices and $C(l)$ are all vertices whose distance to the closest corner vertex is precisely $l$. We first assume that $d$ is even. Then for $x\in C(0)$, $|x|\equiv 0\mod{2}$ because $n$ is odd and $d$ is even; thus, for any $l\geq 0$ and any $x\in C(l)$, $|x|\equiv l\mod{2}$ by definition. Then, we have $C(l)\subset A_{i}$ for $i\in\{0,1\}$ if $l\equiv i\mod{2}$. 
%Note that for any $l$ and $x,x'\in C(l)$ we have $|x|\equiv |x'|\mod{2}$. This is because $n$ is odd, which implies that all elements from $C(0)=\{1,n\}^d$ are of identical parity. 
We show that starting from $\omega$ the sets $C(l)$ become stable one by one in reversible $r-$BP process for $r=d$, and therefore for any $r\leq d$ (see Figure~\ref{fig:figure}, right). Firstly, $C(0)\subset A_{0}$ is stable under $\omega$ because each vertex in $C(0)$ has $d$ neighbors in $C(0)$. For every vertex $x=(x_1,\dots,x_d)\in C(0)$ and any $j=1,\dots,d$, if $x_j$ is equal to $1$ or $n$, we may replace it with respectively $n$ and $1$ to get a neighboring vertex in $C(0)$. Now we show for any $t\geq 1$, given $C(t')$ is stable for all $t'<t$, the set $C(t)$ becomes stable. At time $t$ the set $A_{i}$ is active for $t\equiv i$ mod $2$. This is because if $A_{0}$ is active at time $t-1$ then every vertex in $A_{1}$ has at least $d$ active neighbors so becomes active at time $t$, and vice versa. This implies $C(t)$ is active at time $t$. To prove the stability, we show that every vertex in $C(t)$ has $d$ neighbors in $C(t-1)\cup C(t)$. Let $x\in C(t)$, then for every coordinate $x_j$, if $1<x_j<n$ then either increasing it or decreasing it by 1 (mod $n$) must give a neighbor in $C(t-1)$ by definition. On the other hand, if $x_j\in\{1,n\}$ then either increasing it or decreasing it (mod $n$) gives a neighbor in $C(t)$. So, $C(t)$ becomes stable at time $t$ 
%because $\bigcup_{k=0}^{t}C(k)$ induces an active subgraph of minimum degree $d$. 
 and since $\bigcup_{t=0}^{d\lfloor n/2\rfloor} C(t)=[n]^d$, we infer that $\omega$ is a dynamo.
 
If $A_{0}$ is active in the initial configuration then $A_{1}$ is active in the next configuration in reversible $r-$BP and $r\leq d$. Thus, for odd $d$ the same proof works 
%$A_0$ and $A_1$ (i.e., at time $t$ the set $A_i$ is active for $i+1\equiv t$ mod 2) 
by setting the next configuration as the base case since we have $C(l)\subset A_{i}$ for $i\in\{0,1\}$ if $l\equiv i+1\mod{2}$.
\qed
\end{proof}

\begin{lemma}\label{lem:from_w_r_to_a0mod2}
For $\mathbb{T}_n^d$ with $d\geq 1$ and $d+1\leq r\leq 2d$, let $\omega$ be a dynamo in reversible $r-$BP. Then, there exists a configuration $\omega'$  of size $|\omega'|\leq|\omega|/2+\Theta(n^{d-1})$ such that in reversible $r-$BP process induced by $\omega'$, set $A_{0}$ gets active at some point.
\end{lemma}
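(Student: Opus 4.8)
The key tool I would use is that the reversible $r$-BP rule $\overleftarrow{f}(\cdot,r)$ is \emph{monotone} as a map $\Omega\to\Omega$: if $\sigma\le\tau$ then $\overleftarrow{f}(\sigma,r)\le\overleftarrow{f}(\tau,r)$, since enlarging the active set can only increase the number of active neighbours of each vertex. Hence the process is monotone in its initial configuration, so any configuration dominating a dynamo is again a dynamo. The plan is to combine this with a decoupling of the torus into its two parity classes $A_0$ and $A_1$. The only obstruction to a clean parity decoupling is the ``wraparound'': for odd $n$, an edge along the seam between coordinate values $1$ and $n$ joins two vertices of the \emph{same} parity. I would absorb all such edges into a thin stable frame that I pin to the active state.

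Concretely, set $F:=\bigcup_{j=1}^d\{x\in\mathbb{T}_n^d:x_j\in\{1,n\}\}$, so $|F|\le 2dn^{d-1}=\Theta(n^{d-1})$. A short degree count shows every vertex of $F$ has at least $2d-1$ neighbours in $F$: a vertex with exactly one boundary coordinate $x_j\in\{1,n\}$ keeps that coordinate under the $2(d-1)$ moves in the other coordinates and also under the single wraparound move in coordinate $j$. Thus, for $r\le 2d-1$, an active $F$ is stable (by the stability criterion stated before Lemma~\ref{stable-lemma}). Moreover, every vertex of $V\setminus F$ has all coordinates in $\{2,\dots,n-1\}$, so none of its incident edges wraps around and each flips parity. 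Therefore, once $F$ is pinned active, the induced dynamics on $V\setminus F$ is bipartite between $A_0\setminus F$ and $A_1\setminus F$, with $F$ acting as a fixed active boundary. I would first dispose of the boundary case $r=2d$ separately: there the only dynamo is the all-active configuration, and it suffices to take $\omega'$ with active set $A_0$, which is active already at time $0$ and has size at most $|\omega|/2+\Theta(n^{d-1})$.

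For $d+1\le r\le 2d-1$, let $W$ be the active set of the given dynamo $\omega$. By monotonicity, $\omega\cup F$ (active set $W\cup F$) is still a dynamo, and since $F$ is stable the process it induces keeps $F$ active at all times; call this the $F$-pinned process. By the previous paragraph this process splits into two independent ``threads'': the states of $A_i\setminus F$ at even times depend only on the time-$0$ states of $A_i\setminus F$, through the alternating bipartite updates against the fixed boundary $F$. Since $\omega\cup F$ is a dynamo, its process is all-active from some time $T$ on; hence $A_0\setminus F$ is fully active at every sufficiently large even time (via the thread seeded by $A_0\setminus F$) and at every sufficiently large odd time (via the thread seeded by $A_1\setminus F$). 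Now choose $i\in\{0,1\}$ minimising $|W\cap A_i|$, so $|W\cap A_i|\le|\omega|/2$, and let $\omega'$ have active set $F\cup\big(W\cap(A_i\setminus F)\big)$; then $|\omega'|\le|\omega|/2+\Theta(n^{d-1})$. As $\omega'\supseteq F$, the $\omega'$-process is again $F$-pinned, and on $V\setminus F$ it seeds exactly the $A_i\setminus F$ thread with the same time-$0$ data as $\omega\cup F$ (the other thread is seeded empty, but the threads evolve independently). That thread therefore reproduces the $\omega\cup F$ behaviour and drives $A_0\setminus F$ to be fully active at some even time (if $i=0$) or some odd time (if $i=1$); together with $A_0\cap F\subseteq F$, which is pinned active, this makes all of $A_0$ active at that time, as required.

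The main obstacle is precisely the non-bipartiteness caused by the wraparound seam for odd $n$, and the device of pinning the stable frame $F$ is what restores bipartiteness of the complement and makes the parity decoupling rigorous. The two points requiring care are that $F$ is stable only for $r\le 2d-1$ (which is why $r=2d$ is handled on its own), and that pinning $F$ must be justified by the monotonicity of $\overleftarrow{f}(\cdot,r)$ so as not to destroy the dynamo property while still letting the two parity threads evolve independently.
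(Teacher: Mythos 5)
Your proposal is correct and follows essentially the same route as the paper: the paper also handles $r=2d$ separately, pins a thin stable boundary set (it uses $C=\{x:\exists j,\ x_j>n-2\}$ where you use the seam $F=\{x:\exists j,\ x_j\in\{1,n\}\}$) to absorb the wraparound edges and restore the parity decoupling, keeps the smaller of $W\cap A_0$, $W\cap A_1$, and concludes via the observation that the states of one parity class at times of fixed parity depend only on the initial states of one class plus the pinned set. Your explicit appeal to monotonicity of $\overleftarrow{f}(\cdot,r)$ to justify that $\omega\cup F$ remains a dynamo, and your care about which parity of time activates $A_0$, are welcome clarifications but not a different argument.
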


\begin{proof}
For $r=2d$ the claim is clear as a dynamo is of size $n^d$. So assume $r\leq 2d-1$, and let $W$ be the set of active vertices in $\omega$. Furthermore, let $C:=\{x\in\mathbb{T}_n^d:\exists j~\text{s.t.} ~x_j> n-2\}$, which is of size less than $2dn^{d-1}=\Theta(n^{d-1})$. $|W\cap A_i|\le |W|/2$ for $i=0$ or $i=1$, say $i=0$. We claim configuration $\omega'$, where exactly all vertices in $(W\cap A_{0})\cup C$ are active, eventually makes set $A_0$ active. We have $|\omega'|=|(W\cap A_{0})\cup C|\leq|\omega|/2+\Theta(n^{d-1})$. Note that since $\omega$ is a dynamo, any configuration in which $W\cup C$ is fully active is also a dynamo. Let $T$ be the first even time for which all vertices are active in reversible $r-$BP with the initial configuration where exactly vertices in $W\cup C$ are active. We show that in reversible $r-$BP process induced by $\omega'$, all vertices in $A_{0}$ are active at time $T$. First of all, the vertices in $C$ are stable in $\omega'$ under reversible $r-$BP for any $r\leq 2d-1$ because they induce a subgraph of minimum degree $2d-1$. %This is because for every vertex in $C$, at most one of its coordinates can be increased by $+1$ or $-1$ to obtain a vertex outside of $C$.
So we only have to argue that all vertices in $A_{0}\backslash C$ are active at time $T$. 
%Therefore, when starting with $W\cup W'$ as the initially active set all vertices in $A_{0}$ are stable at time $T$.
Notice the states of the vertices in $A_1$ at time $t$ only depend on the states of the vertices of $A_0\cup C$ at time $t-1$, and the states of the vertices in $A_0$ at time $t+1$ only depend on the states of the vertices of $A_1\cup C$ at time $t$. Since the vertices in $C$ are always active and $T$ is even, by an inductive argument the states of the vertices in $A_{0}\backslash C$ at time $T$ depend only on those in $A_{0}\cup C$ at time zero. So reversible $r-$BP process induced by $\omega'$ makes active all vertices in $A_{0}$ at time $T$. 
\qed
\end{proof}

\begin{theorem}\label{thm:odd_n_torus}
In $\mathbb{T}_n^d$ with odd $n$ and for $1\leq r \leq d$, $\overleftarrow{m}(\mathbb{T}_n^d,r)\le \frac{1}{r}\binom{d}{r-1}n^{r-1}+\Theta(n^{r-2})$.
\end{theorem}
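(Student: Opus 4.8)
The plan is to reduce the theorem to activating the even sublattice $A_0$, and then to build a cheap seed that achieves this through the sub-tori $T(k)$. Concretely, I would first record the elementary but crucial fact that reversible $r$-BP is order-preserving: if $\omega\ge\tilde\omega$ then $\overleftarrow f(\omega,r)\ge\overleftarrow f(\tilde\omega,r)$, since lowering no coordinate can only increase the number of active neighbours of any vertex. Combined with Lemma~\ref{lem:blinking_odd_n_dynamo} (which says $\mathbf 1_{A_0}$ is a dynamo for odd $n$ and $1\le r\le d$), this means: \emph{any} configuration whose reversible $r$-BP process reaches, at some time, a configuration dominating $\mathbf 1_{A_0}$ is itself a dynamo. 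So it suffices to exhibit $\omega'$ of size $\frac1r\binom d{r-1}n^{r-1}+\Theta(n^{r-2})$ that drives all of $A_0$ active.

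The case $r=1$ is immediate (two adjacent active vertices), so assume $r\ge2$ and construct $\omega'$ sub-torus by sub-torus. Each $T(k)$, $k\in K(r)$, induces a copy of $\mathbb{T}_n^{r-1}$, and the ambient threshold $r$ equals $(r-1)+1$, which is exactly the \emph{large}-$r$ regime for an $(r-1)$-dimensional torus. Hence Theorems~\ref{thm:irrev_vs_rev_threshold_dynamo_on_torus} and~\ref{thm:threshold_dynamos_large_r} give $\overleftarrow m(\mathbb{T}_n^{r-1},r)=\frac2r n^{r-1}\pm\Theta(n^{r-2})$, and Lemma~\ref{lem:from_w_r_to_a0mod2} applies inside $T(k)$ to a smallest such dynamo, producing a seed $\sigma_k$ of size at most $\frac1r n^{r-1}+\Theta(n^{r-2})$ whose isolated sub-torus process makes the even class of $T(k)$ fully active at some even time. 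I then set $\omega'=\bigcup_{k\in K(r)}\sigma_k$; since $|K(r)|=\binom d{r-1}$ and the $T(k)$ pairwise overlap in only $O(n^{r-2})$ vertices, $|\omega'|\le\frac1r\binom d{r-1}n^{r-1}+\Theta(n^{r-2})$, as required.

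It remains to run $\omega'$ in the ambient torus and show it forces $A_0$. First I would check that, inside each $T(k)$, the ambient process agrees with the isolated one long enough to bring the even class of $T(k)$ to life: the neighbours of $T(k)$ lying outside every sub-torus start inactive, and the stable borders supplied by Lemma~\ref{lem:from_w_r_to_a0mod2} persist, so at a common even time the whole of $\bigcup_k$(even class of $T(k)$) $= D(r)\cap A_0$ (up to swapping $A_0,A_1$) is active. The heart of the argument is then to show that this sparse, blinking pattern spreads across the torus until, at some even time, the active set contains all of $A_0$; at that moment the first paragraph finishes the proof via Lemma~\ref{lem:blinking_odd_n_dynamo}. (Equivalently, one may push the spreading until $D(r)$ is fully active and invoke Lemma~\ref{lem:small_r_dynamo_upper_bound}.)

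I expect this spreading step to be the main obstacle, and oddness of $n$ to be exactly what makes it work. The sets $A_0,A_1$ fail to bipartition $\mathbb{T}_n^d$ only because of the wrap-around (``seam'') edges, and for odd $n$ those edges join vertices of \emph{equal} parity; this is the coupling that lets the blinking waves launched along the sub-tori cross the seams instead of merely oscillating in place. Concretely I would mimic the proof of Lemma~\ref{lem:blinking_odd_n_dynamo}: observe that the $2^d$ corners are stable (each has $d\ge r$ corner-neighbours through the seams) and that the seed $D(r)\cap A_0$ quickly activates all corners, then run an induction on the distance-to-corner shells $C(l)$, showing that at even times every not-yet-stabilised shell has its $A_0$-part active while the stabilised shells stay fully active, so the active configuration dominates $\mathbf 1_{A_0}$ before the process terminates. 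Verifying that each newly reached shell always sees the required $r$ active neighbours — despite starting from the thin set $D(r)\cap A_0$ rather than from all of $A_0$ — is the delicate point; the bookkeeping should be routine once the shell structure and the parity alternation are set up, but it is where the real work lies.
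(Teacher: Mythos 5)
Your seed construction is exactly the paper's: run Theorem~\ref{thm:threshold_dynamos_large_r} and Lemma~\ref{lem:from_w_r_to_a0mod2} inside each $(r-1)$-dimensional sub-torus $T(k)$ (correctly observing that the ambient threshold $r=(r-1)+1$ puts you in the large-$r$ regime there), take the union over $k\in K(r)$, reduce everything to showing that an active $D(r)\cap A_0$ forces all of $A_0$, and finish with Lemma~\ref{lem:blinking_odd_n_dynamo}. Your explicit appeal to order-preservation, to justify that the ambient process dominates the isolated sub-torus processes, is a welcome clarification of a step the paper leaves implicit. However, the one step you defer --- spreading from the thin set $D(r)\cap A_0$ to all of $A_0$ --- is precisely the content of the theorem beyond its ingredients, and the route you sketch for it does not work. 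You propose to activate all $2^d$ corners ``quickly'' and then induct on the corner shells $C(l)$ as in Lemma~\ref{lem:blinking_odd_n_dynamo}. But for $r<d$ the corner $(n,\dots,n)$ has \emph{no} neighbour in $D(r)$: each of its neighbours has at most one coordinate equal to $1$, while membership in $D(r)$ requires $d-r+1\geq 2$ coordinates equal to $1$. So that corner is not activated early; in the correct argument it is essentially the \emph{last} vertex reached, after roughly $d(n-1)$ rounds. More generally, the shell decomposition $C(l)$ is adapted to a configuration in which all of $A_0$ is already blinking (the hypothesis of Lemma~\ref{lem:blinking_odd_n_dynamo}), not to a wave emanating from the sparse set $D(r)$.

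The structure that makes the spreading work is the coordinate-sum filtration $B_j$, not the corner shells: a vertex $x\notin D(r)$ has at most $d-r$ coordinates equal to $1$, hence at least $r$ coordinates exceeding $1$, hence at least $r$ neighbours in $B_{j-1}$. Setting $W_t:=\bigcup_{j\equiv t\bmod 2,\ d\le j\le t+(d+r-1)}B_j$, one shows by induction that $W_t$ is active at time $T+t$ (the bottom levels $B_d,\dots,B_{d+r-1}$ lie inside $D(r)$ and keep blinking with the correct parity because $D(r)\cap A_0$ and $D(r)\cap A_1$ alternate), and for even $t\geq d(n-1)$ one has $A_0\subseteq W_t$. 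Without this, or some substitute for it, your proof is incomplete at its central point. A minor further issue: for $r=1$ the claimed bound is $1+\Theta(n^{-1})$, so a seed of two adjacent vertices is one too many; the paper starts from a single active vertex and uses oddness of $n$ again so that the two travelling fronts meet after $\lfloor n/2\rfloor$ rounds and create the stable adjacent pair.
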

%\red{Can we extend the statement to $r=1$?}

\begin{proof} 
For $r=1$, consider the configuration in which only the vertex $x=(1,\cdots,1)$ is active. After $\lfloor n/2\rfloor$ rounds two adjacent vertices $(\lfloor n/2\rfloor, 1, \cdots, 1)$ and $(\lfloor n/2\rfloor+1, 1, \cdots, 1)$ are active. Two adjacent active vertices are stable and make the whole torus active eventually.

Let now $r\geq 2$ and recall the definition of $D(r)$ and $K(r)$ from (\ref{eq:k_r}) and (\ref{eq:smalldynamoconstruction}). For some $k=(k_1,\dots,k_{d-r+1})\in K(r)$, let $\omega_r\in\{0,1\}^{n^{r-1}}$ be a smallest dynamo in the subgraph induced by $T(k)$ under reversible $r-$BP. We know from Theorem~\ref{thm:threshold_dynamos_large_r} that $|\omega_r|\le \frac{2}{r}n^{r-1}+\Theta(n^{r-2})$. By Lemma~\ref{lem:from_w_r_to_a0mod2} we obtain a configuration $\omega_r'$ from $\omega_r$ with $|\omega_r'|\leq|\omega_r|/2+\Theta(n^{r-2})$ such that configuration $\omega_r'$ results in the activation of the vertices in $T(k)\cap A_{0}$ after $T$ rounds for some $T\ge 0$. We repeat this construction for $T(k)$ for all $k\in K(r)$. Let $\omega$ be the configuration obtained by taking $\binom{d}{r-1}$ copies of such $\omega_r'$ for the vertices in $\bigcup_{k\in K(r)}T(k)$ and setting the remaining entries to zero. 
%Let $\omega$ be the configuration in which a vertex is active if it is active in $\hat{\omega}_r^{(k)}$ for some $k\in K(r)$, and inactive otherwise. 
We claim $\omega$ is a dynamo of our desired size. Firstly, $|\omega|\leq |K(r)|\cdot|\omega_r'|\leq \frac{1}{r}\binom{d}{r-1}n^{r-1}+\Theta(n^{r-2})$. Furthermore, after $T$ rounds, all vertices in $D(r)\cap A_0$ are active because $D(r)=\bigcup_{k\in K(r)} T(k)$. We show that from a configuration where all vertices in $D(r)\cap A_0$ are active, the process makes $A_0$ fully active. Then, applying Lemma~\ref{lem:blinking_odd_n_dynamo} finishes the proof.
%is active at some time $T$. The proof proceeds in two steps. First, we construct a configuration $\omega$ of the desired size that makes the set $D(r)\cap A_{0}$ active. We then show that if $D(r)\cap A_{0}$ is active then $A_{0}$ becomes active. By Lemma \ref{lem:blinking_odd_n_dynamo} this proves that $\omega$ is a dynamo. This means for all $t\geq T$ the sets $T(k)\cap A_{0}$ and $T(k)\cap A_{1}$ are active in alternating rounds. We repeat this construction for $T(k')$ for all $k'\in K(r)$. Let $\omega$ be the configuration that results when we take $\binom{d}{r-1}$ copies of such $\omega_r'$ for the vertices in $\bigcup_{k\in K(r)}T(k)$ and set the remaining entries to zero. Then in reversible $r-$BP process in $\mathbb{T}_n^d$ induced by $\omega$ the set $D(r)\cap A_{0}$ is active at time $T$. This completes the first step.

We now argue that if $D(r)\cap A_{0}$ is active under reversible $r-$BP then $A_{0}$ becomes active (see Figure~\ref{fig:figure}, right). First assume that $d$ is even. We argue that at time $T+t$ ($t\geq 0$) the set $W_t:=\bigcup_{j\geq d,j\equiv t\mod{2}}^{t+(d+r-1)} B_j$ is active; thus after at most $T+d(n-1)$ rounds, the set $A_{0}$ is eventually active. By definition, $B_j\subset D(r)\cap A_{0}$ for all even $j$ with $d\leq j\leq d+r-1$. This means that at all times $t\in\N$, if $d\leq j\leq d+r-1$ and $j\equiv t$ mod $2$ then set $B_j$ is active at time $T+t$ because $D(r)\cap  A_{0}$ and $D(r)\cap A_{1}$ are active in alternating rounds. %By the above, to prove that $W_t$ ($t\geq 0)$ is active at time $T+t$ 
It remains to show that $B_j$ is active at time $t$ for all $j$ with $(d+r-1)< j\leq t+(d+r-1)$ and $j\equiv t$ mod 2. 
For $t=0$, this holds as we argued. 
%since $d$ being even implies $t\equiv d$ it holds that $D(r)\cap A_{i}=W_t$ is active at time $T+t$ for $i\in\{0,1\}$ with $i\equiv t$ mod 2 (i.e., $i=0$ in this case). 
We now show by induction that $W_t$ is active at time $t$ for all $t\geq 1$. To see this, let $x\in B_{j+1}$ for $j$ with $d\leq j\leq t+(d+r-1)$ and $j+1\equiv t+1$ mod 2. We show that $x$ is active at time $T+t+1$. If $x\in D(r)$ then this is true so let $x\not\in D(r)$. Then $x$ has at least $r$ coordinates greater than one, so it has at least $r$ neighbors in $B_j$ and $B_j\subset W_t$ is active at time $t$. So $B_{j+1}\subset W_{t+1}$ is active at time $t+1$. Therefore, for all even $t\geq d(n-1)$ the set $A_{0}\subset W_t$ is active at time $T+t$. If $d$ is odd then the proof goes analogously, simply by replacing $t$ with $t':=t+1$ and switching even and odd for $d$.
\qed
\end{proof}
\subsection{Monotone majority dynamos}\label{section:majority_dynamo}
Regarding the size of a smallest dynamo in $\mathbb{T}_n^d$ and the majority model, it is known~\cite{balbol09} that $\overleftarrow{m}(\mathbb{T}_n^d,maj)=(1-\frac{d}{d+1})n^{d}\pm \Theta(n^{d-1})$. However, $\overleftarrow{m}_{mon}(\mathbb{T}_n^d,maj)$ is only known for $d=2$~\cite{flocchini04}. We prove $\overleftarrow{m}_{mon}(\mathbb{T}_n^d,maj)=(1-\frac{d}{d+2})n^d\pm\Theta(n^{d-1})$. We first provide a lower bound on the minimum size of monotone dynamos in $2d-$regular graphs, including $\mathbb{T}_n^d$.
%The proof is similar to the one of Theorem 25 in \cite{balbol09}.
\begin {lemma}\label{lem:majority_dynamo_lower_bound}
For a $2d-$regular graph $G$, $\overleftarrow{m}_{mon}(G,maj)\geq (1-\frac{d}{d+2})|G|$.
\end {lemma}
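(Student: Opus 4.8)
The plan is to use an edge-counting (perimeter) argument for the monotone process, combined with the structural constraint that monotonicity imposes on the initial active set. Throughout, write $N:=|G|$ and let $W_t$ denote the set of active vertices at time $t$, so that monotonicity means $W_0\subseteq W_1\subseteq\cdots$. The key quantity is the \emph{perimeter} $\partial W_t$, defined as the number of edges with exactly one endpoint in $W_t$. Since $G$ is $2d$-regular, the majority rule translates into two facts I would record first: an inactive vertex becomes active only if it has at least $d+1$ active neighbors (strict majority among its $2d$ neighbors), and an active vertex remains active only if it has at least $d$ active neighbors.

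Next I would prove the per-step estimate $\partial W_{t+1}\le\partial W_t-2|W_{t+1}\setminus W_t|$. Writing $O:=W_t$, $\Delta:=W_{t+1}\setminus W_t$, and $I:=V\setminus W_{t+1}$, one checks directly that $\partial W_t=e(O,\Delta)+e(O,I)$ and $\partial W_{t+1}=e(O,I)+e(\Delta,I)$, so the change equals $e(\Delta,I)-e(O,\Delta)$. Each $v\in\Delta$ was inactive at time $t$ and became active, so it has at least $d+1$ neighbors in $O$ and hence at most $d-1$ neighbors in $\Delta\cup I$; summing gives $e(O,\Delta)\ge(d+1)|\Delta|$ and $e(\Delta,I)\le(d-1)|\Delta|$, which yields the claimed decrease. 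This handles simultaneous updates cleanly, which is the one place where care is needed. Summing over all rounds until the torus is fully active (where $\partial W_T=0$), I obtain $\partial W_0\ge 2(N-|W_0|)$.

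The decisive second ingredient is the monotonicity constraint on $W_0$ itself. If some $v\in W_0$ had fewer than $d$ active neighbors in $W_0$, then $v$ would turn inactive at the next step, contradicting $W_0\subseteq W_1$. Hence every vertex of $W_0$ has at least $d$ neighbors inside $W_0$, so $e(W_0)\ge d|W_0|/2$. Using $\partial W_0=2d|W_0|-2e(W_0)$ (handshake within $W_0$), this gives $\partial W_0\le 2d|W_0|-d|W_0|=d|W_0|$. Combining the two bounds, $d|W_0|\ge\partial W_0\ge 2(N-|W_0|)$, so $(d+2)|W_0|\ge 2N$ and therefore $|W_0|\ge\frac{2}{d+2}N=(1-\frac{d}{d+2})|G|$, as desired.

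I expect the main obstacle to be the monotonicity observation on $W_0$: it is exactly this internal-degree lower bound $e(W_0)\ge d|W_0|/2$ that sharpens the naive bound $\frac{1}{d+1}|G|$ (obtained from the perimeter argument alone, using only $\partial W_0\le 2d|W_0|$) to the tight $\frac{2}{d+2}|G|$. The perimeter decrease step is routine once the $O$/$\Delta$/$I$ bookkeeping is set up, and the arithmetic at the end is immediate; the conceptual content lies in recognizing that the self-preference tie-breaking of the majority rule (a vertex stays active with only $d$ active neighbors, rather than needing $r$ as in reversible $r$-BP) is what lets the monotonicity constraint bite.
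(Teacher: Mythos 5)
Your proof is correct, and it takes a genuinely different route from the paper's for one of the two halves of the argument. Both proofs hinge on the same two bounds for the edge boundary $\partial W_0=e(G_0,G_1)$ between the initially active set $G_1=W_0$ and the initially inactive set $G_0$. For the upper bound $\partial W_0\le d|W_0|$ you argue exactly as the paper does: monotonicity forces every initially active vertex to keep at least $d$ active neighbors, hence to send at most $d$ edges out of $W_0$; both proofs correctly identify this as the step where monotonicity improves the naive $\frac{1}{d+1}|G|$ to the tight $\frac{2}{d+2}|G|$. For the lower bound $\partial W_0\ge 2|G_0|$ the routes diverge. The paper argues statically: a subgraph of $G_0$ with minimum degree $d$ would stay inactive forever, so $G_0$ is $(d-1)$-degenerate, giving $e(G_0)\le(d-1)|G_0|$ and hence $e(G_0,G_1)=2d|G_0|-2e(G_0)\ge 2|G_0|$; notably this half uses only the dynamo property, not monotonicity, and matches the template of Lemma~\ref{lem:monotone_dynamo_threshold} in the appendix. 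You instead run a dynamic perimeter (potential-function) argument, showing each newly activated vertex decreases the edge boundary by at least $2$ and telescoping to $\partial W_0\ge 2(|G|-|W_0|)$; this is the classical bootstrap-percolation perimeter technique, is carefully handled for simultaneous updates via your $O/\Delta/I$ bookkeeping, but does rely on monotonicity to keep the sets $W_t$ nested. The paper's version is shorter and isolates which hypothesis each inequality needs; yours additionally yields quantitative information about how fast activation can spread. One cosmetic slip: you say ``until the torus is fully active,'' but the lemma concerns general $2d$-regular graphs; your argument never uses torus structure, so nothing is lost.
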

\begin{proof} 
Let $\omega$ be an arbitrary monotone dynamo in the majority model, and define $G_0$ and $G_1$ to be respectively the subgraph induced by the inactive vertices and active vertices in $\omega$. $G_0$ must not contain a subgraph with minimum degree $d$ or larger because such a subgraph would stay inactive under the majority rule. So every subgraph of $G_0$ has a vertex of degree no more than $d-1$. This implies that $e(G_0)\leq (d-1)|G_0|$ where $e(G_0)$ denotes the number of edges in $G_0$ (just repeatedly remove the vertex with minimum degree). Furthermore, for the number of edges between $G_0$ and $G_1$ we have $e(G_0,G_1)=2d|G_0|-2e(G_0)$. By putting the last two equations together, we have $e(G_0,G_1)\geq 2d|G_0|-2(d-1)|G_0|=2|G_0|$. On the other hand, by the monotonicity of $\omega$, no vertex in $G_1$ can have more than $d$ edges to $G_0$, which implies $d|G_1|\geq e(G_0,G_1)$. Therefore, $d|G_1| \geq e(G_0,G_1)\ge 2|G_0|$, which gives $|G_1|\geq \frac{2}{d}|G_0|$. By applying $|G|=|G_0|+|G_1|$, we have $|G_1|\geq (1-\frac{d}{d+2})|G|$.
\qed 
\end{proof}

%\begin {lemma}\label{lem:majority_dynamo_lower_bound}
%Let $G$ be a $\Delta-$regular graph of even degree $\Delta$ and let $d:=\Delta/2$. Then %$\overleftarrow{m}(G,maj)\geq\frac{1}{d+1}|G|$ and
%$\overleftarrow{m}_{mon}(G,maj)\geq \frac{2}{d+2}|G|$.
%\end {lemma}
%\begin{proof} %The proof is similar to the one of Lemma \ref{lem:monotone_dynamo_threshold}. The lower bound for the case without monotonicity case follows from the lower bound on $(d+1)-$BP dynamos from Theorem 25 in \cite{balbol09}, by noting that any configuration that is a dynamo in the majority process is also a dynamo under $r-$BP for $r\leq\lceil (\Delta +1)/2\rceil$. As to the monotone case, note that $G_0$, 
%The proof is similar to the one of Theorem 25 in \cite{balbol09}. Let $G_0$ be the subgraph induced by the inactive vertices. Then $G_0$ must not contain a subgraph of minimum degree $\Delta-d$, because such a subgraph would stay inactive under the majority rule. So every subgraph of $G_0$ has a vertex of degree no more than $\Delta-r$, where $r:=d+1$. This implies that for the number of edges it must hold $e(G_0)\leq (\Delta-r)|G_0|$. On the other hand, the handshake lemma gives $e(G_0,G\backslash G_0)\geq\Delta|G_0|-2e(G_0)\geq (2r-\Delta)|G_0|$, meaning that $e(G_0,G\backslash G_0)\geq 2|G_0|$. Moreover, if $\omega$ is to be monotone then no vertex in $G\backslash G_0$ must have more than $d$ edges to $G_0$ and so $d|G\backslash G_0|\geq e(G_0,G\backslash G_0)\geq 2|G_0|$, implying $|\omega|=|G\backslash G_0|\geq \frac{2}{d+2}|G|$.  \end{proof}
We now show that this bound is tight on the torus (notice that it is not necessarily tight for all $2d-$regular graphs, for instance the complete graph). The idea of the proof is similar to Theorem~\ref{thm:threshold_dynamos_large_r} and relies on some construction inspired from~\cite{balbol09}.
%We construct a new set $S$ that when active stays active under the majority rule and that has the desired size. In order to turn it into a monotone dynamo, we add the set $H$ from the proof of Theorem \ref{thm:threshold_dynamos_large_r} and argue with the $B_i$'s again.
\begin {theorem}\label{thm:majority_dynamo_construction}
For all $d\geq 1$ it holds that $\overleftarrow{m}_{mon}(\mathbb{T}_n^d,maj)\leq (1-\frac{d}{d+2})n^d+\Theta(n^{d-1}).$
\end {theorem}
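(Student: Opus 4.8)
The plan is to mirror the argument of Theorem~\ref{thm:threshold_dynamos_large_r}, adapting the two thresholds that the majority rule imposes on $\mathbb{T}_n^d$. Since the graph is $2d$-regular, the rule reads: an inactive vertex becomes active exactly when it has at least $d+1$ active neighbours, while an already active vertex stays active as long as it keeps at least $d$ active neighbours. So activation behaves like reversible $(d+1)$-BP, but stability only requires $d$ active neighbours, and it is precisely this slack between the activation threshold $d+1$ and the stability threshold $d$ that should permit a backbone strictly sparser than the reversible $(d+1)$-BP dynamo of Theorem~\ref{thm:threshold_dynamos_large_r}, lowering the density from $\frac{2}{d+1}$ to the optimal $\frac{2}{d+2}$ certified by Lemma~\ref{lem:majority_dynamo_lower_bound}.

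First I would reuse the stable boundary set $H=\bigcup_{j=1}^d\{x:x_j\le 2\}$: every vertex of $H$ has at least $2d-1\ge d$ neighbours in $H$, so $H$ is stable under the majority rule, it absorbs the wrap-around irregularities, and it supplies the base case of the layered induction since it contains all low layers $B_j$. The heart of the construction is a set $S'\subset[n]^d$, built by an explicit periodic pattern inspired by the proof of Theorem~26 in~\cite{balbol09}, with three properties: (1) every vertex of $S'$ has precisely $d$ neighbours in $S'$; (2) if $x\in B_j\backslash(S'\cup H)$ then $x$ has precisely one neighbour in $S'\backslash B_{j-1}$; and (3) $|S'|\le(1-\frac{d}{d+2})n^d+\Theta(n^{d-1})$. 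Property (1) meets the stability threshold with equality, which is exactly what lets $S'$ be thinner than the $(d+1)$-BP backbone (whose vertices each needed $d+1$ internal neighbours). I would isolate the existence of such an $S'$ as a separate appendix lemma, analogous to Lemma~\ref{lem:construction_of_S}.

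Granting $S'$, I would take $\omega$ to be the configuration in which $H\cup S'$ is active, of size at most $|S'|+|H|\le(1-\frac{d}{d+2})n^d+\Theta(n^{d-1})$. By property (1) and the degree bound on $H$, every initially active vertex has at least $d$ active neighbours, so $H\cup S'$ is stable; Lemma~\ref{stable-lemma}, which is stated for the majority model, then guarantees that any vertex becoming active stays active, so it suffices to prove that every vertex is eventually activated and the resulting dynamo is automatically monotone. For the activation I would run the layered induction exactly as in Theorem~\ref{thm:threshold_dynamos_large_r}: letting the active region grow one layer $B_j$ per round, a vertex $x\in B_j\backslash(H\cup S')$ has precisely $d$ neighbours in $B_{j-1}$ (all coordinates of $x$ exceed $1$ because $x\notin H$, so decreasing each coordinate lands in $B_{j-1}$) and, by property (2), exactly one further active neighbour in $S'\backslash B_{j-1}$; since the previous layer is active by the induction hypothesis, $x$ sees $d+1$ distinct active neighbours and is activated by the majority rule. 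Once the active region covers the whole torus, the proof is complete.

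The main obstacle is constructing $S'$ and verifying property (3): I would adapt the periodic construction of~\cite{balbol09} so that each backbone vertex carries exactly $d$ internal neighbours while still donating one extra active neighbour to every wave-front vertex, and then count $|S'|$ to confirm the density $\frac{2}{d+2}$ rather than $\frac{2}{d+1}$. The delicate point is that a naive periodic stable pattern of this density does not percolate on its own: for $d=2$, a width-two diagonal-stripe pattern is stable yet leaves each inactive vertex with only two active neighbours, one short of the activation threshold $3$. The construction must therefore thread the backbone through the layers so that the advancing wave front always gains the missing unit, playing the role that the $(r-d)$-neighbour property played in the large-$r$ case.
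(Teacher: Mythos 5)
Your proposal follows essentially the same route as the paper: the same stable boundary set $H$, a backbone $S$ with exactly the three properties you list (each vertex of $S$ having precisely $d$ internal neighbours for stability, and each wave-front vertex in $B_j\backslash(S\cup H)$ gaining one extra neighbour in $S\backslash B_{j-1}$ to reach the activation threshold $d+1$), followed by the layered induction of Theorem~\ref{thm:threshold_dynamos_large_r}; the paper likewise defers the explicit periodic construction of $S$ and the verification of its properties to an appendix lemma (Lemma~\ref{lem:construction_of_S2}). Your reading of the two thresholds ($d+1$ to activate, $d$ to remain active) and your remark on why property (2) is indispensable match the paper's argument exactly.
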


\begin{proof} 
Let us first introduce a vertex set $S$. We distinguish between $d$ even and $d$ odd.
For even $d$, %we let $$f_{even}(x):=x_1+2x_2+\dots + (d/2)x_{d/2} \mod{\frac{d+2}{2}}$$ and set $S:=\{x\in\mathbb{T}_n^d: f_{even}(x)= 1\}.$
set $S:=\bigl\{x=(x_1,\dots,x_d)\in\mathbb{T}_n^d: x_1+2x_2+\dots + (d/2)x_{d/2} \mod{\frac{d+2}{2}}= 1\bigr\}.$
As an example, in $\mathbb{T}_n^2$, $S$ is simply the union of the vertices in every second  ``column''. If $d$ is odd, %let $$f_{odd}(x):=2x_1+4x_2+6x_3+\dots+(d+1)x_{\frac{d+1}{2}}\mod{(d+2)}$$ and $S:=\{x\in\mathbb{T}_n^d:f_{odd}(x)\in\{1,2\}\}.$
we set $S:=\bigl\{x\in\mathbb{T}_n^d:2x_1+4x_2+6x_3+\dots+(d+1)x_{\frac{d+1}{2}}\mod{(d+2)}\in\{1,2\}\bigr\}.$ Let again $H:=\bigcup_{j=1}^d\{x\in\mathbb{T}_n^d: x_j\leq 2\}$.
In both cases ($d$ even and odd), $S$ satisfies the following:
\begin{enumerate}
    \item Every vertex in $S$ has precisely $d$ neighbors in $S$.
    %\item Every vertex not in $S$ has precisely $2(r-d)$ neighbors in $S$. Moreover, if $x\in B_i\backslash S$ for some $i\geq 3d$ then $x$ has precisely $r-d$ neighbors in $B_{i-1}\cap S$.
    \item If $x\in B_j\backslash (S\cup H)$ then $x$ has precisely one neighbors in $S\backslash B_{j-1}$.
    \item $|S|\leq (1-\frac{d}{d+2})n^d+\Theta(n^{d-1})$.
    %\item If $x\in S$ and $x'$ results by changing only the last coordinate $x_d$ of $x$ then $x'\in S$ also.
\end{enumerate}
These properties of $S$ are derived similar to the proof of Theorem 26 in~\cite{balbol09} (for details see Lemma~\ref{lem:construction_of_S2} in the appendix). Now, we claim a configuration $\omega$ in which precisely all vertices in $S\cup H$ are active is a monotone dynamo. Since $|S|\leq (1-\frac{d}{d+2})n^d+\Theta(n^{d-1})$ and $|H|=\Theta(n^{d-1})$, $\omega$ has our desired size. It remains to argue that $\omega$ is a monotone dynamo. Since all vertices which are active in $\omega$ are stable then by Lemma~\ref{stable-lemma} each vertex which gets active during the process is stable. Furthermore, with the same argument as the proof of Theorem \ref{thm:threshold_dynamos_large_r}, one can show $B_j$s get active one by one.
\qed
\end{proof}
%A natural question arises: is the aforementioned lower bound tight for all $2d-$regular graphs? The answer is no, because for instance it does not hold for the complete graph $K_n$.

\subsection*{Acknowledgement.}
The authors are thankful to Bernd G\"artner for several stimulating conversations and the careful reading of the manuscript.
%\newpage 
\bibliographystyle{acm}
\bibliography{references}

\begin{thebibliography}{10}

\bibitem{adler1988diffusion}
{\sc Adler, J., and Aharony, A.}
\newblock Diffusion percolation. i. infinite time limit and bootstrap
  percolation.
\newblock {\em Journal of Physics A: Mathematical and General 21}, 6 (1988),
  1387.

\bibitem{aizenman1988metastability}
{\sc Aizenman, M., and Lebowitz, J.~L.}
\newblock Metastability effects in bootstrap percolation.
\newblock {\em Journal of Physics A: Mathematical and General 21}, 19 (1988),
  3801.

\bibitem{balbol09}
{\sc Balister, P., Bollobás, B., Johnson, J.~R., and Walters, M.}
\newblock Random majority percolation.
\newblock {\em Random Structures \& Algorithms 36}, 3 (2010), 315--340.

\bibitem{balogh2006bootstrap}
{\sc Balogh, J., and Bollob{\'a}s, B.}
\newblock Bootstrap percolation on the hypercube.
\newblock {\em Probability Theory and Related Fields 134}, 4 (2006), 624--648.

\bibitem{balogh2010bootstrap}
{\sc Balogh, J., Bollob{\'a}s, B., and Morris, R.}
\newblock Bootstrap percolation in high dimensions.
\newblock {\em Combinatorics, Probability and Computing 19}, 5-6 (2010),
  643--692.

\bibitem{balogh1998random}
{\sc Balogh, J., and Pete, G.}
\newblock Random disease on the square grid.
\newblock {\em Random Structures and Algorithms\/} (1998), 409--422.

\bibitem{cardelli2012cell}
{\sc Cardelli, L., and Csik{\'a}sz-Nagy, A.}
\newblock The cell cycle switch computes approximate majority.
\newblock {\em Scientific reports 2\/} (2012), 656.

\bibitem{cerf1999finite}
{\sc Cerf, R., and Cirillo, E.~N.}
\newblock Finite size scaling in three-dimensional bootstrap percolation.
\newblock {\em Annals of probability\/} (1999), 1837--1850.

\bibitem{chang2011spreading}
{\sc Chang, C.-L., and Lyuu, Y.-D.}
\newblock Spreading of messages in random graphs.
\newblock {\em Theory of Computing Systems 48}, 2 (2011), 389--401.

\bibitem{threshold_dynamo_g_n_p_2013}
{\sc Chang, C.-L., and Lyuu, Y.-D.}
\newblock Bounding the sizes of dynamic monopolies and convergent sets for
  threshold-based cascades.
\newblock {\em Theoretical Computer Science 468\/} (2013), 37 -- 49.

\bibitem{dynamo_approximation_hard}
{\sc Fazli, M., Ghodsi, M., Habibi, J., Jalaly, P., Mirrokni, V., and
  Sadeghian, S.}
\newblock On non-progressive spread of influence through social networks.
\newblock {\em Theoretical Computer Science 550\/} (2014), 36 -- 50.

\bibitem{threshold_dynamo_random_graph_2017}
{\sc Feige, U., Krivelevich, M., and Reichman, D.}
\newblock Contagious sets in random graphs.
\newblock {\em Ann. Appl. Probab. 27}, 5 (10 2017), 2675--2697.

\bibitem{flocchini04}
{\sc Flocchini, P., Lodi, E., Luccio, F., Pagli, L., and Santoro, N.}
\newblock Dynamic monopolies in tori.
\newblock {\em Discrete Applied Mathematics 137}, 2 (2004), 197 -- 212.

\bibitem{gartner2017biased}
{\sc G{\"a}rtner, B., and Zehmakan, A.~N.}
\newblock (biased) majority rule cellular automata.
\newblock {\em arXiv preprint arXiv:1711.10920\/} (2017).

\bibitem{Gaertner2017_majority_torus}
{\sc G{\"a}rtner, B., and Zehmakan, A.~N.}
\newblock Color war: Cellular automata with majority-rule.
\newblock In {\em International Conference on Language and Automata Theory and
  Applications\/} (2017), Springer, pp.~393--404.

\bibitem{gartner2018majority}
{\sc G{\"a}rtner, B., and Zehmakan, A.~N.}
\newblock Majority model on random regular graphs.
\newblock In {\em Latin American Symposium on Theoretical Informatics\/}
  (2018), Springer, pp.~572--583.

\bibitem{hambardzumyan2017polynomial}
{\sc Hambardzumyan, L., Hatami, H., and Qian, Y.}
\newblock Polynomial method and graph bootstrap percolation.
\newblock {\em arXiv preprint arXiv:1708.04640\/} (2017).

\bibitem{holroyd2003sharp}
{\sc Holroyd, A.~E.}
\newblock Sharp metastability threshold for two-dimensional bootstrap
  percolation.
\newblock {\em Probability Theory and Related Fields 125}, 2 (2003), 195--224.

\bibitem{bootstrap_percolation_sharp_threshold_lattice}
{\sc József~Balogh, Béla~Bollobás, H. D.-C. R.~M.}
\newblock The sharp threshold for bootstrap percolation in all dimensions.
\newblock {\em Transactions of the American Mathematical Society 364}, 5
  (2012), 2667--2701.

\bibitem{optimization_maximize_sprad_of_influence}
{\sc Kempe, D., Kleinberg, J., and Tardos, E.}
\newblock Maximizing the spread of influence through a social network.
\newblock In {\em Proceedings of the Ninth ACM SIGKDD International Conference
  on Knowledge Discovery and Data Mining\/} (New York, NY, USA, 2003), KDD '03,
  ACM, pp.~137--146.

\bibitem{morrison2018extremal}
{\sc Morrison, N., and Noel, J.~A.}
\newblock Extremal bounds for bootstrap percolation in the hypercube.
\newblock {\em Journal of Combinatorial Theory, Series A 156\/} (2018), 61--84.

\bibitem{PELEG_dynamo}
{\sc Peleg, D.}
\newblock Size bounds for dynamic monopolies.
\newblock {\em Discrete Applied Mathematics 86}, 2 (1998), 263 -- 273.

\bibitem{peleg2002local}
{\sc Peleg, D.}
\newblock Local majorities, coalitions and monopolies in graphs: a review.
\newblock {\em Theoretical Computer Science 282}, 2 (2002), 231--257.

\bibitem{riedl2010largest}
{\sc Riedl, E.}
\newblock Largest minimal percolating sets in hypercubes under $2 $-bootstrap
  percolation.
\newblock {\em the electronic journal of combinatorics 17}, 1 (2010), 80.

\bibitem{SCHONMANN1990619}
{\sc Schonmann, R.~H.}
\newblock Finite size scaling behavior of a biased majority rule cellular
  automaton.
\newblock {\em Physica A: Statistical Mechanics and its Applications 167}, 3
  (1990), 619 -- 627.

\bibitem{schonmann1992behavior}
{\sc Schonmann, R.~H.}
\newblock On the behavior of some cellular automata related to bootstrap
  percolation.
\newblock {\em The Annals of Probability\/} (1992), 174--193.

\end{thebibliography}

\section{Appendix}

\subsection{Proof of Lemma~\ref{lem:monotone_dynamo_threshold}}
\label{regular graphs}

\begin {lemma}\label{lem:monotone_dynamo_threshold}
Let $G$ be a $\Delta-$regular graph and $r\leq \Delta$. Then $\overleftarrow{m}_{mon}(G,r)\geq 2(1-\frac{\Delta}{2r})|G|$.
\end {lemma}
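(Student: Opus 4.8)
The plan is to mimic the edge-counting argument used for the majority model in Lemma~\ref{lem:majority_dynamo_lower_bound}, adapting the two structural constraints to the reversible $r$-BP rule. Fix an arbitrary monotone dynamo $\omega$ and let $G_1$ (resp. $G_0$) be the subgraph of $G$ induced by the vertices that are active (resp. inactive) in $\omega$; the goal is to lower-bound $|G_1|=|\omega|$. I would extract two inequalities relating the cut size $e(G_0,G_1)$ to $|G_0|$ and $|G_1|$, and then eliminate $e(G_0,G_1)$ between them.

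First, the constraint coming from $G_1$: by monotonicity we have $\omega^{(1)}\geq\omega$, so every vertex active in $\omega$ is still active after one round, which under reversible $r$-BP means it has at least $r$ active neighbors. Since $G$ is $\Delta$-regular, each vertex of $G_1$ therefore sends at most $\Delta-r$ edges to $G_0$, giving $e(G_0,G_1)\leq(\Delta-r)|G_1|$. Second, the constraint coming from $G_0$: I claim $G_0$ cannot contain a subgraph in which every vertex has at least $\Delta-r+1$ neighbors inside that subgraph. Indeed, such a subgraph $H$ would have every vertex with at most $r-1$ neighbors outside $H$, so by an easy induction the vertices of $H$ have at most $r-1$ active neighbors at every time step and hence never activate, contradicting that $\omega$ is a dynamo. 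Consequently every subgraph of $G_0$ has a vertex of degree at most $\Delta-r$, and the standard min-degree removal (degeneracy) argument yields $e(G_0)\leq(\Delta-r)|G_0|$. Combining this with the handshake identity $e(G_0,G_1)=\Delta|G_0|-2e(G_0)$ gives $e(G_0,G_1)\geq(2r-\Delta)|G_0|$.

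Finally, chaining the two bounds gives $(2r-\Delta)|G_0|\leq(\Delta-r)|G_1|$; substituting $|G_0|=|G|-|G_1|$ and solving for $|G_1|$ yields $|G_1|\geq\bigl(2-\frac{\Delta}{r}\bigr)|G|=2\bigl(1-\frac{\Delta}{2r}\bigr)|G|$, which is exactly the claimed bound on $\overleftarrow{m}_{mon}(G,r)$.

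The main obstacle I expect is the $G_0$ inequality: one must identify the correct degree threshold $\Delta-r+1$ guaranteeing that a subgraph of $G_0$ stays permanently inactive, and justify the induction carefully in the reversible setting where vertices could in principle oscillate — the key point being that the vertices of such an $H$ can never accumulate $r$ active neighbors regardless of the states of the rest of the graph. I would also note that the resulting bound is only meaningful when $2r>\Delta$, which is precisely the regime of interest (for instance large $r$ on $\mathbb{T}_n^d$, where $\Delta=2d$).
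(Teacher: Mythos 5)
Your proof is correct and follows essentially the same route as the paper: the same $G_0/G_1$ decomposition, the same bound $e(G_0,G_1)\leq(\Delta-r)|G_1|$ from stability of the initially active set, and the same degeneracy/handshake argument showing $e(G_0,G_1)\geq(2r-\Delta)|G_0|$ (which the paper isolates as a separate lemma, attributed to Theorem 25 of~\cite{balbol09}). Your justification that a subgraph of $G_0$ with minimum degree $\Delta-r+1$ can never activate matches the paper's reasoning.
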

%Note that this is exactly twice the lower bound of Theorem 25 in \cite{balbol09}.
\begin{proof}
Let $\omega$ be a monotone dynamo in $G$ in reversible $r-$BP. Let $G_0$ denote the subgraph induced by all inactive vertices and $G_1$ the subgraph induced by the active vertices in $\omega$. Every vertex in $G_1$ has at least $r$ neighbors in $G_1$, since it must never become inactive. Consequently, no vertex in $G_1$ can have more than $\Delta-r$ edges to $G_0$, which implies that for the number of edges $e(G_0,G_1)$ between $G_0$ and $G_1$ it holds $e(G_0,G_1)\leq (\Delta-r)|G_1|$. On the other hand, if $G_0$ contains a subgraph of minimum degree at least $\Delta-r+1$ then no vertex in this subgraph ever becomes active. This would contradict $\omega$ being a dynamo. We infer that $(\Delta-r)|G_1|\geq e(G_0,G_1)\geq (2r-\Delta)|G_0|,$ where the second inequality comes from Lemma \ref{lem:dynamo_subgraph_lower_bound} below. Using that $|G_0|+|G_1|=|G|$ we obtain $r|G_1|\geq (2r-\Delta)|G|$, which gives the claim.
\qed
\end{proof}

\begin {lemma}[from Theorem 25 in \cite{balbol09}]\label{lem:dynamo_subgraph_lower_bound}
Let $G$ be a $\Delta-$regular graph with configuration $\omega$. Assume $G_0$ and $G_1$ be the subgraphs induced by the inactive and active vertices, and let $1\leq r\leq \Delta$. If $G_0$ contains no subgraph of minimum degree $\Delta-r+1$ or more then $e(G_0,G_1)\geq (2r-\Delta)|G_0|.$
\end {lemma}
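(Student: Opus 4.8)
The plan is to reduce the entire statement to a single edge-counting identity forced by regularity, and then to control the number of edges internal to $G_0$ using the hypothesis. First I would record the identity: since every vertex of $G_0$ has degree exactly $\Delta$ in $G$, its $\Delta$ incident edges split into edges internal to $G_0$ and edges crossing to $G_1$. Summing the crossing count over all vertices of $G_0$ and using $\sum_{v\in G_0}\deg_{G_0}(v)=2e(G_0)$ gives $e(G_0,G_1)=\Delta|G_0|-2e(G_0)$. Consequently the desired inequality $e(G_0,G_1)\ge (2r-\Delta)|G_0|$ is \emph{equivalent} to the upper bound $e(G_0)\le(\Delta-r)|G_0|$ on the number of internal edges of $G_0$.

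Next I would extract the combinatorial content of the hypothesis. Saying that $G_0$ contains no subgraph of minimum degree $\Delta-r+1$ or more is exactly saying that every subgraph of $G_0$ (including $G_0$ itself) has a vertex of degree at most $\Delta-r$; that is, $G_0$ is $(\Delta-r)$-degenerate. Making this equivalence explicit is the key step, since it is what licenses a peeling argument.

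Finally I would run the standard peeling argument to obtain $e(G_0)\le(\Delta-r)|G_0|$: repeatedly delete a vertex of degree at most $\Delta-r$, which exists at every stage because every remaining subgraph still satisfies the hypothesis. Each deletion destroys at most $\Delta-r$ edges, and since each edge is destroyed exactly once — when the first of its two endpoints is removed — after deleting all $|G_0|$ vertices we have $e(G_0)\le(\Delta-r)|G_0|$. Substituting into the identity from the first step yields $e(G_0,G_1)=\Delta|G_0|-2e(G_0)\ge \Delta|G_0|-2(\Delta-r)|G_0|=(2r-\Delta)|G_0|$, as claimed. I do not expect a genuine obstacle; the only points needing care are stating the equivalence between the ``no dense subgraph'' hypothesis and $(\Delta-r)$-degeneracy cleanly, and confirming that the peeling bound counts each edge exactly once rather than twice. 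Everything else is the bookkeeping identity imposed by regularity, mirroring the analogous step in the proof of Lemma~\ref{lem:majority_dynamo_lower_bound} with the threshold $\Delta-r$ in place of $d-1$.
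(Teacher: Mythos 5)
Your proof is correct and follows essentially the same route as the paper: both extract the $(\Delta-r)$-degeneracy of $G_0$ from the hypothesis, peel vertices to get $e(G_0)\le(\Delta-r)|G_0|$, and combine with the handshake/regularity identity $e(G_0,G_1)=\Delta|G_0|-2e(G_0)$ to conclude. No gaps.
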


\begin{proof}
By the assumption, every induced subgraph of $G_0$ has a vertex of degree at most $\Delta-r$. Starting with $G_0$, we can iteratively take away the vertices of smallest degree. Since in every one of the resulting subgraphs there is a vertex of degree at most $\Delta-r$, we conclude that $e(G_0)\leq(\Delta-r)|G_0|$. Using the handshake lemma, summation over all vertex degrees then gives
$e(G_0,G_1)\geq\Delta|G_0|-2e(G_0)\geq (2r-\Delta)|G_0|$.
\qed
\end{proof}

\subsection{Proof of Lemma~\ref{lem:construction_of_S}}
\label{set S}

\begin{lemma}[adapted from the proof of Theorem 26 in \cite{balbol09}]\label{lem:construction_of_S}
For $\mathbb{T}_n^d$ with $d+1\le r\leq 2d$, let $H:=\bigcup_{j=1}^d\{x\in\mathbb{T}_n^d: x_j\leq 2\}$. Then there exists a vertex set $S\subset \mathbb{T}_n^d$ with the following properties:
\begin{enumerate}
    \item Every vertex in $S$ has precisely $r$ neighbors in $S$.
    %\item Every vertex not in $S$ has precisely $2(r-d)$ neighbors in $S$. Moreover, if $x\in B_i\backslash S$ for some $i\geq 3d$ then $x$ has precisely $r-d$ neighbors in $B_{i-1}\cap S$.
    \item If $x\in B_j\backslash (S\cup H)$ then $x$ has precisely $r-d$ neighbors in $S\backslash B_{j-1}$.
    \item $|S|\leq 2\big(1-\frac{d}{r}\big)n^d+\Theta(n^{d-1})$.
    %\item If $x\in S$ and $x'$ results by changing only the last coordinate $x_d$ of $x$ then $x'\in S$ also.
\end{enumerate}
%\begin{enumerate}
    %\item $S$ is $r-$regular.
    %\item Every vertex not in $S$ has precisely $2(r-d)$ neighbors in $S$. Moreover, if $x\in B_i\backslash (S\cup H)$ then $x$ has precisely $r-d$ neighbors in $B_{i-1}\backslash S$.
    %\item $|S|\leq \frac{2(r-d)}{r}n^d+\Theta(n^{d-1})$
%\end{enumerate}
\end{lemma}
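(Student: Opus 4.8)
The plan is to exhibit an explicit set $S$ and then verify the three properties by a purely local neighbour count, following the construction in the proof of Theorem~26 of \cite{balbol09}. Concretely, I would look for $S$ of the form $S=\{x\in\mathbb{T}_n^d:\ \sum_{i=1}^d a_ix_i \bmod r\in I\}$, where $a_1,\dots,a_d\in\mathbb{Z}_r$ are fixed coefficients and $I\subseteq\mathbb{Z}_r$ is a target set of residues with $|I|=2(r-d)$, so that $|S|$ equals, up to wrap-around defects, the fraction $|I|/r=2(1-\tfrac{d}{r})$ of all $n^d$ vertices, giving property~3. The point of using the ambient modulus $r$ together with cleverly chosen coefficients $a_i$ is that passing from $x$ to a neighbour obtained by changing one coordinate by $\pm 1$ shifts the defining form by exactly $\pm a_i \bmod r$, so whether a neighbour lies in $S$ depends only on the residue $\phi(x):=\sum_i a_ix_i \bmod r$. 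As a sanity check, for $d=2,\ r=3$ the choice $a=(1,0)$, $I=\{1,2\}$ recovers $S=\{x:x_1\not\equiv 0 \bmod 3\}$, which one checks directly has every $S$-vertex with exactly $3$ neighbours in $S$ and every non-$S$-vertex with exactly one coordinate-increasing neighbour in $S$.

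With such an $S$, both local properties reduce to statements about the multiset $A=\{a_1,\dots,a_d\}$ and the set $I$ inside $\mathbb{Z}_r$. For property~1 I would fix a residue $c\in I$ and count, among the $2d$ neighbours of any $x$ with $\phi(x)=c$, how many land back in $S$: the coordinate-increasing neighbours contribute $|A\cap(I-c)|$ and the decreasing ones contribute $|A\cap(c-I)|$, so property~1 becomes the requirement $|A\cap(I-c)|+|A\cap(c-I)|=r$ for every $c\in I$. For property~2 I would observe that if $x\notin H$ then all $d$ coordinate-decreasing neighbours stay in the level $B_{j-1}$, so the neighbours of $x$ lying in $S\setminus B_{j-1}$ are precisely its coordinate-increasing neighbours that fall in $S$; their number is $|A\cap(I-\phi(x))|$ with $\phi(x)\notin I$, and the required value is $r-d$. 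Thus the construction reduces to choosing $A$ and $I$ with $|A\cap(I-c)|=r-d$ for all $c\notin I$ and $|A\cap(I-c)|+|A\cap(c-I)|=r$ for all $c\in I$; one checks by the counting identity $\sum_{c\in\mathbb{Z}_r}|A\cap(I-c)|=d|I|=2d(r-d)$ that these demands are mutually consistent, and then produces an admissible pair $(A,I)$ as in \cite{balbol09}.

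The main obstacle is twofold. First, designing $(A,I)$ so that the forward count is \emph{uniformly} $r-d$ off $I$ (and hence the total degree is uniformly $r$ on $I$) is a genuine combinatorial constraint: a single interval $I$ together with equidistributed coefficients does not suffice once $r-d$ is small, so the residues of $A$ must be spread out carefully, and it is exactly this combinatorial choice that is imported from \cite{balbol09}. Second, the residue pattern $\phi(x)\bmod r$ is only consistent around each cycle of $\mathbb{T}_n^d$ when $r\mid n$; in general the wrap-around $x_i=n\mapsto x_i=1$ shifts the form by $a_i(1-n)$ instead of $a_i$, creating a ``seam'' on which the exact counts in properties~1 and~2 may break. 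I would handle this by repairing $S$ on the $O(n^{d-1})$ vertices near the seams (and near the low-coordinate region excised by $H$), absorbing all such corrections into the $\Theta(n^{d-1})$ term of property~3 while keeping the exact counts intact elsewhere; checking that this repair can be done without destroying properties~1 and~2 is the delicate final step.
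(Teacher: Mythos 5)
Your framework is the paper's framework: an arithmetic set $S=\{x:\sum_i a_ix_i\bmod m\in I\}$ of density $2(1-\tfrac dr)$, with properties~1 and~2 reduced to uniform forward/backward counts of the coefficient multiset against translates of $I$ (your reduction and your consistency check are both correct). The gap is that you never produce the admissible pair $(A,I)$ — you defer exactly the step that constitutes the content of the lemma to \cite{balbol09}. The paper spells it out, splitting on the parity of $r$: for even $r$ it works modulo $m=r/2$ (not $r$) with coefficients $1,2,\dots,\tfrac r2-1$ on the first $\tfrac r2-1$ coordinates, zero on the rest, and $I=\{1,\dots,r-d\}$; for odd $r$ it works modulo $r$ with coefficients $2,4,\dots,r-1$ and $I=\{1,\dots,2(r-d)\}$. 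The whole verification rests on one fact: the signed coefficients $\pm a_i$ (omitting the zeros) cover each nonzero residue class of $m$ equally often (twice in the even case, once in the odd case), so every vertex outside $S$ has exactly $r-d$ coordinate-increasing and $r-d$ coordinate-decreasing neighbours in $S$, and every vertex of $S$ has exactly $2d-r$ neighbours outside $S$. In particular, your stated obstacle that ``a single interval $I$ together with equidistributed coefficients does not suffice once $r-d$ is small'' is false — a single interval with equidistributed signed coefficients is precisely what works, for every $d+1\le r\le 2d$.

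Your second worry, the wrap-around seam when $m\nmid n$, is real, but repairing $S$ by editing it near the seam is the wrong move: any local modification changes the exact neighbour counts of adjacent vertices, and properties~1 and~2 are used exactly, not approximately, in Theorem~\ref{thm:threshold_dynamos_large_r}. The resolution implicit in the paper is the set $H=\bigcup_j\{x:x_j\le 2\}$, which is carried alongside $S$ in every application: a coordinate-decreasing step can only wrap for a vertex with some $x_j=1$, i.e.\ a vertex already in $H$, and a coordinate-increasing step wraps only from $x_j=n$ and then lands at $x_j=1\in H$, which is active and stable throughout. So the exact counts are only ever needed for moves that do not cross the seam, and the $\Theta(n^{d-1})$ slack is consumed by $|H|$, not by surgery on $S$.
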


\begin{proof}
We distinguish $r$ even and $r$ odd.
\begin{itemize}
    \item {If $r$ is even, we let $$f_{even}(x):=x_1+2x_2+\dots + (\frac{r}{2}-1)x_{\frac{r}{2}-1} \mod{\frac{r}{2}}$$
    and define
    $S:=\{x\in\mathbb{T}_n^d: f_{even}(x) \in\{1,2,\dots,r-d\}\}.$
    If $x\in S$ then $x$ has exactly $2(\frac{r}{2}-(r-d))=2d-r$ neighbors that are not in $S$, so exactly $r$ neighbors inside $S$. %We conclude that $S$ is stable under $r-$BP rule, meaning that if we make all vertices in $S$ active then they will stay active.
    Let now $x\not\in S$. Observe that by construction of $f_{even},$ we have that $r-d$ out of the first $(\frac{r}{2}-1)$ coordinates of $x$ can be increased by one such that the resulting vertex lies in $S$. This is because the coefficients $1,2,\dots,r/2-1$ correspond to all nonzero residue classes mod $\frac{r}{2}$. Symmetrically, there are $r-d$ coordinates of $x$ that we can decrease by one to obtain an element of $S$. This simply means that every vertex that does not lie in $S$ has precisely $2(r-d)$ neighbors in $S$, and it is immediate that if $x\in B_j\backslash H$ for some $j$ then precisely half of these neighbors lie in $S\cap B_{j-1}$ (namely those we obtain by decreasing one coordinate of $x$) and the other $r-d$ neighbors lie in $S\backslash B_{j-1}$.}
    \item {If $r$ is odd, we let
    $$f_{odd}(x):=2x_1+4x_2+6x_3+\dots+(r-1)x_{\frac{r-1}{2}}\mod{r}$$
    and define
    $S:=\{x\in\mathbb{T}_n^d:f_{odd}(x)\in\{1,2,\dots, 2(r-d)\}\}.$
    As before, we show that every vertex in $S$ has exactly $r$ neighbors in $S$. Again, note that the coefficients $\pm 2$, $\pm 4$, $\dots,\pm (r-1)$ are precisely the non-zero residue classes $\mod{r}$. Then as above, $x$ has $(r-2(r-d))=2d-r$ neighbors outside $S$ and $r$ neighbors that lie in $S$.
    For $x\not\in S$, among the first $\frac{r-1}{2}$ coordinates there are $r-d$ coordinates $i$ such that for $x':=(x_1,\dots,x_{i-1},x_i+1,x_{i+1},\dots,x_d)$ we have $f(x')=f(x)+2i\mod{r}\in\{1,\dots,2(r-d)\}$, meaning $x'\in S$. Analogously, there are $r-d$ coordinates that we can decrease by $1$ to obtain neighbors of $x$ in $S$. All in all, $x$ again has $2(r-d)$ neighbors in $S$. As above, we note that if $x\in B_j\backslash H$ for some $j$ then precisely half of these neighbors lie in $S\cap B_{j-1}$, and the other $r-d$ neighbors lie in $S\backslash B_{j-1}$.}
\end{itemize}
It remains to estimate the size of $S$. Fixing the coordinates $x_2,\dots,x_d$ of a vertex leaves at most $\lceil n\frac{2(r-d)}{r}\rceil$ choices for $x_1$ to obtain a vertex $x\in S$, in both cases ($r$ even and $r$ odd). It follows that $|S|\leq \lceil n\frac{2(r-d)}{r}\rceil n^{d-1}\leq 2(1-\frac{d}{r})n^d+n^{d-1}$.
\qed
\end{proof}

\subsection{Proof of Lemma~\ref{lem:construction_of_S2}}
\label{majority}

\begin{lemma}\label{lem:construction_of_S2}
Let $H:=\bigcup_{j=1}^d\{x\in\mathbb{T}_n^d: x_j\leq 2\}$. Then there exists a vertex set $S\subset \mathbb{T}_n^d$ with the following properties:
\begin{enumerate}
    \item Every vertex in $S$ has precisely $d$ neighbors in $S$.
    %\item Every vertex not in $S$ has precisely $2(r-d)$ neighbors in $S$. Moreover, if $x\in B_i\backslash S$ for some $i\geq 3d$ then $x$ has precisely $r-d$ neighbors in $B_{i-1}\cap S$.
    \item If $x\in B_j\backslash (S\cup H)$ then $x$ has precisely one neighbors in $S\backslash B_{j-1}$.
    \item $|S|\leq (1-\frac{d}{d+2})n^d+\Theta(n^{d-1})$.
    %\item If $x\in S$ and $x'$ results by changing only the last coordinate $x_d$ of $x$ then $x'\in S$ also.
\end{enumerate}
\end{lemma}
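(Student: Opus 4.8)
The plan is to verify the three properties for precisely the set $S$ already exhibited in Theorem~\ref{thm:majority_dynamo_construction}, running the same bookkeeping as in the proof of Lemma~\ref{lem:construction_of_S} but with the thresholds retuned for the majority rule (a stable active vertex now needs only $d$ active neighbours, and an inactive vertex needs $d+1$). The organizing idea is that $S$ is the preimage, under a linear form $\ell$ in a chosen half of the coordinates, of a tiny set of residues, and that the signed coefficients of $\ell$ sweep out the nonzero residues of the modulus. For even $d$ I take $\ell(x)=x_1+2x_2+\dots+(d/2)x_{d/2}$ modulo $m:=\tfrac{d+2}{2}$, where the coefficients $1,\dots,d/2$ are exactly the nonzero residues mod $m$; for odd $d$ I take $\ell(x)=2x_1+4x_2+\dots+(d+1)x_{(d+1)/2}$ modulo $d+2$, using that $d+2$ is odd so that the ``$+1$'' moves realise the even residues $2,4,\dots,d+1$ and the ``$-1$'' moves the odd residues, together covering every nonzero class exactly once.

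I would first prove Property~1 by counting neighbours. Changing any coordinate absent from $\ell$ leaves $\ell(x)$ fixed and so keeps the vertex in $S$: for even $d$ this already yields all $d$ neighbours (the $d/2$ silent coordinates, each $\pm1$), and no move on an active coordinate can return to the residue $1$ since every coefficient is nonzero mod $m$; for odd $d$ the $\tfrac{d-1}{2}$ silent coordinates give $d-1$ neighbours, and exactly one active move carries $\ell$ between the two admissible residues $1,2$ (via the coefficient $d+1\equiv-1$), for a total of $d$. Hence every $x\in S$ has precisely $d$ neighbours in $S$.

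For Property~2 I apply the same count to $x\notin S$, writing $c:=\ell(x)$. Landing in $S$ forces the $\ell$-change to equal $1-c$ (even $d$) or to lie in $\{1-c,2-c\}$ (odd $d$); as $c$ is inadmissible these targets are nonzero, and since the signed coefficients hit each nonzero residue once, $x$ has exactly two neighbours in $S$. The point is to separate them by direction: for $x\in B_j\setminus(S\cup H)$ every coordinate exceeds $2$ (as $x\notin H$), so a ``$-1$'' move never wraps and lands in $B_{j-1}$, whereas the matching ``$+1$'' move lands outside $B_{j-1}$; exactly one of the two $S$-neighbours comes from each direction, leaving precisely one in $S\setminus B_{j-1}$. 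Property~3 is then a direct count: fixing $x_2,\dots,x_d$, the congruence pins $x_1$ to one residue class mod $m$ (even $d$) or two classes mod $d+2$ (odd $d$), so $|S|\le 2\lceil n/(d+2)\rceil n^{d-1}\le(1-\tfrac{d}{d+2})n^d+\Theta(n^{d-1})$.

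The main obstacle, exactly as for Lemma~\ref{lem:construction_of_S}, is the modular wraparound of the torus coordinates: increasing a coordinate from $n$ to $1$ changes $\ell$ by a coefficient times $(1-n)$ rather than by the coefficient, and it also breaks the clean $B_j$-band structure near the far boundary. I would dispose of it in the standard way by first assuming the modulus divides $n$, so that $n\equiv 0$ makes a wraparound move change $\ell$ by exactly $\pm(\text{coefficient})$ and keeps all the above counts intact (note the ``$-1$'' moves in Property~2 never wrap, since $x\notin H$), and then folding the general case and the $O(n^{d-1})$ boundary vertices into the $\Theta(n^{d-1})$ error term. It is worth recording that this is also why $H$ appears in Theorem~\ref{thm:majority_dynamo_construction}: any $x\in S$ whose neighbour count is perturbed at the boundary has the offending neighbour inside $H$, which is active throughout, so Property~1 still certifies the stability of $S\cup H$.
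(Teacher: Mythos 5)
Your proposal is correct and follows essentially the same route as the paper's own proof: the identical sets $S$ (defined via the same linear forms $f_{even}$, $f_{odd}$ and the same residue classes), the same neighbour counts for Properties 1 and 2 based on the signed coefficients sweeping out the nonzero residues of the modulus, and the same size estimate by fixing $x_2,\dots,x_d$ and counting admissible values of $x_1$. The only substantive difference is that you explicitly flag and handle the modular wraparound at coordinates equal to $n$, a point the paper's proof passes over in silence; this makes your version, if anything, slightly more careful.
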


\begin{proof}
We distinguish between $d$ even and $d$ odd. Let $x=(x_1,\dots,x_d)\in\mathbb{T}_n^d$ be a vertex.
\begin{itemize}
    \item If $d$ is even, we let $$f_{even}(x):=x_1+2x_2+\dots + (d/2)x_{d/2} \mod{\frac{d+2}{2}}$$
    and set
    $S:=\{x\in\mathbb{T}_n^d: f_{even}(x)= 1\}.$
    Let us analyze the properties of the subgraph induced by $S$. If $x\in S$ then adding $+1$ or $-1$ to one of the coordinates $x_{d/2+1},\dots,x_d$ gives a vertex in $S$. Therefore, every vertex in $S$ has $2\cdot d/2=d$ neighbors in $S$.
    Let now $x\not\in S$. Observe that by construction of $f_{even},$ one of the first $d/2$ coordinates of $x$ can be increased by one such that the resulting vertex lies in $S$. This is because the coefficients $1,2,\dots,d/2$ correspond to all nonzero residue classes mod $(d/2+1)$. Symmetrically, there is one coordinate of $x$ that we can decrease to obtain an element of $S$. This simply means that every vertex that does not lie in $S$ has two neighbors in $S$, and it is immediate that if $x\in B_j\backslash H$ for some $j$ then precisely one of these neighbors lies in $S\cap B_{j-1}$ (namely the one we obtain by decreasing one coordinate of $x$).
    \item If $d$ is odd, we let
    $$f_{odd}(x):=2x_1+4x_2+6x_3+\dots+(d+1)x_{\frac{d+1}{2}}\mod{(d+2)}$$
    and set
    $S:=\{x\in\mathbb{T}_n^d:f_{odd}(x)\in\{1,2\}\}.$
    As before, we show that the induced subgraph by $S$ is $d-$regular. Pick an $x\in S$ with $f_{odd}(x)=2$. It immediately follows by the same argument as above that $x$ has $2(d-\frac{d+1}{2})=d-1$ neighbors in $S$ that are obtained by adding $\pm 1$ to any of the $\frac{d-1}{2}$ last coordinates. We obtain one further neighbor of $x$ in $S$ by adding $+1$ to $x_{\frac{d+1}{2}}$, which results in a vertex $x'$ with $f_{odd}(x')=1$. Analogously, if $f_{odd}(x)=1$ then we subtract one from $x_{\frac{d+1}{2}}$ to get an $x'$ with $f_{odd}(x')=2$ and so $x'\in S$. Hence, every vertex in $S$ has $d$ neighbors in $S$.
    Furthermore, for $x\not\in S$ one of the first $\frac{d+1}{2}$ coordinates of $x$ must be such that adding one to it gives a neighbor in $S$. Analogously, one of these coordinates can be decreased by one to give another neighbor in $S$.
    As above, we note that if $x\in B_j\backslash H$ for some $j$ then $x$ has precisely one neighbor in $S\cap B_{j-1}$.
\end{itemize}
It remains to estimate $|S|$. Fixing the coordinates $x_2,\dots,x_d$ of a vertex leaves at most $\lceil n\frac{2}{d+2}\rceil$ choices for $x_1$ to obtain a vertex $x\in S$, in both cases ($d$ even and $d$ odd). It follows that $|S|\leq (1-\frac{d}{d+2})n^d+\Theta(n^{d-1})$.
\qed
\end{proof}

\end{document}